\title{An Internal Language for Categories Enriched over Generalised
  Metric Spaces} 
\titlerunning{An Internal Language for Categories Enriched over Generalised
  Metric Spaces} 
\author{Fredrik Dahlqvist}{University College London, United Kingdom}{f.dahlqvist@ucl.ac.uk}{}{}
\author{Renato Neves}{University of Minho \& INESC-TEC, Portugal}{nevrenato@gmail.com}{}{}
\authorrunning{F. Dahlqvist and R. Neves} 
\keywords{$\lambda$-calculus, enriched category theory, quantale, equational theory} 
\DeclarePairedDelimiter\abs{\lvert}{\rvert}%
\DeclarePairedDelimiter\norm{\lVert}{\rVert}%
\let\oldabs\abs
\def\abs{\@ifstar{\oldabs}{\oldabs*}}
\let\oldnorm\norm
\def\norm{\@ifstar{\oldnorm}{\oldnorm*}}
\newcommand{\defeq}{\triangleq}
\newcommand{\Cats}{\catfont{Cat}}
\newcommand{\Met}{\catfont{Met}}
\newcommand{\Free}[1]{\sfunfont{F}}
\newcommand{\typeI}{\typefont{I}}
\newcommand{\Shuff}{\mathrm{Sf}}
\newcommand{\VCat}{\mathcal{V}\text{-}\Cats}
\newcommand{\VCatSy}{\mathcal{V}\text{-}\Cats_{\mathsf{sym}}}
\newcommand{\VCatSe}{\mathcal{V}\text{-}\Cats_{\mathsf{sep}}}
\newcommand{\VCatSS}{\mathcal{V}\text{-}\Cats_{\mathsf{sym,sep}}}
\newcommand{\Lang}{\mathrm{Lang}}
\newcommand{\Syn}{\mathrm{Syn}}
\newcommand{\catfont}[1]{\mathsf{#1}}
\newcommand{\catC}{\catfont{C}}
\newcommand{\catD}{\catfont{D}}
\newcommand{\Set}{\catfont{Set}}
\newcommand{\Pos}{\catfont{Pos}}
\newcommand{\Subs}[2]{\catfont{Sub}_{}}
\newcommand{\Aut}{\catfont{Aut}}
\newcommand{\funfont}[1]{#1}
\newcommand{\funF}{\funfont{F}}
\newcommand{\funG}{\funfont{G}}
\newcommand{\sfunfont}[1]{\mathrm{#1}}
\newcommand\adjunct[2]{\xymatrix@=8ex{\ar@{}[r]|{\top}\ar@<1mm>@/^2mm/[r]^{{#2}}
& \ar@<1mm>@/^2mm/[l]^{{#1}}}}
\newcommand\adjunctop[2]{\xymatrix@=8ex{\ar@{}[r]|{\bot}\ar@<1mm>@/^2mm/[r]^{{#2}}
& \ar@<1mm>@/^2mm/[l]^{{#1}}}}
\newcommand\retract[2]{\xymatrix@=8ex{\ar@{}[r]|{}\ar@<1mm>@/^2mm/@{^{(}->}[r]^{{#2}}
& \ar@<1mm>@/^2mm/@{->>}[l]^{{#1}}}}
\newcommand{\sem}[1]{\llbracket #1 \rrbracket}
\newcommand{\app}{\mathrm{app}}
\newcommand{\comp}{\cdot}
\DeclareMathOperator{\id}{\mathsf{id}}
\DeclareMathOperator{\sw}{\mathsf{sw}}
\DeclareMathOperator{\spl}{\mathsf{sp}}
\DeclareMathOperator{\sh}{\mathsf{sh}}
\DeclareMathOperator{\join}{\mathsf{jn}}
\DeclareMathOperator{\exch}{\mathsf{exch}}
\newcommand{\resp}{resp.\xspace~}
\newcommand{\ie}{\textit{i.e.}\xspace~}
\newcommand{\eg}{\textit{e.g.}\xspace~}
\newcommand{\Nats}{\mathbb{N}}
\newcommand{\typefont}[1]{\mathbb{#1}}
\newcommand{\typeA}{\typefont{A}}
\newcommand{\typeB}{\typefont{B}}
\newcommand{\typeC}{\typefont{C}}
\newcommand{\typeD}{\typefont{D}}
\newcommand{\rulename}[1]{(\mathrm{\mathbf{#1}})}
\newcommand{\vljud}{\rhd}
\newcommand{\closValBP}[2]{\mathsf{Values}(#1, #2)}
\newcommand{\prog}[1]{\mathtt{#1}}
\newcommand{\real}{\mathtt{Real}}
\newcommand{\bern}{\mathtt{bernoulli}}
\newcommand{\normal}{\mathtt{normal}}
\newcommand{\Ban}{\mathsf{Ban}}
\newcommand{\ptp}{\widehat{\otimes}_{\pi}}
\newcommand{\meas}{\mathcal{M}}
\newcommand{\R}{\mathbb{R}}
\newcommand{\unit}{\mathtt{unit}}
\begin{document}

\maketitle

\begin{abstract}
  Programs with a \emph{continuous} state space or that interact with
  physical processes often require notions of equivalence going beyond
  the standard binary setting in which equivalence either holds or
  does not hold. In this paper we explore the idea of equivalence
  taking values in a quantale $\mathcal{V}$, which covers the cases of
  (in)equations and (ultra)metric equations among others.

  Our main result is the introduction of a
  \emph{$\mathcal{V}$-equational deductive system} for linear
  $\lambda$-calculus together with a proof that it is sound and
  complete (in fact, an internal language) for a class of enriched
  autonomous categories. In the case of inequations, we get an
  internal language for autonomous categories enriched over partial
  orders. In the case of (ultra)metric equations, we get an internal
  language for autonomous categories enriched over (ultra)metric
  spaces.

  We use our results to obtain examples of inequational and metric
  equational systems for higher-order programs that contain real-time
  and probabilistic behaviour.
\end{abstract}

\section{Introduction}


Programs frequently act over a \emph{continuous} state space or
interact with physical processes like time progression or the movement
of a vehicle. Such features naturally call for notions of
approximation and refinement integrated in different aspects of
program equivalence. Our paper falls in this line of research.
Specifically, our aim is to integrate notions of approximation and
refinement into the \emph{equational system} of linear
$\lambda$-calculus~\cite{benton92,mackie93,maietti05}.

The core idea that we explore in this paper is to have equations
$t =_q s$ labelled by elements $q$ of a quantale $\mathcal{V}$. This
covers a wide range of situations, among which the cases of
(in)equations~\cite{kurz2017quasivarieties,adamek20} and metric
equations~\cite{mardare16,mardare17}. The latter case is perhaps less
known: it consists of equations $t =_{\epsilon}s$ labelled by a
non-negative rational number $\epsilon$ which represents the `maximum
distance' that the two terms $t$ and $s$ can be from each other. In
order to illustrate metric equations, consider a programming language
with a (ground) type $X$ and a signature of operations
$\Sigma = \{ \prog{wait_n} : X \to X \mid n \in \Nats \}$ that model
time progression over computations of type $X$. Specifically,
$\prog{wait_n}(x)$ reads as ``add a latency of $n$ seconds to the
computation $x$''. In this context, the following axioms involving
metric equations arise naturally:
\begin{flalign}\label{ax}
  \prog{wait_0}(x) =_0 x \hspace{1cm} \prog{wait_n}(\prog{wait_m}(x)) =_0
  \prog{wait_{n + m }}(x) \hspace{1cm}
  \infer{\prog{wait_n}(x) =_\epsilon \prog{wait_m}(x)}{\epsilon = |m - n|}
\end{flalign}
An equation $t =_0 s$ states that the terms $t$ and $s$ are exactly the
same and equations $t =_\epsilon s$ state that $t$ and $s$ differ by
\emph{at most} $\epsilon$ seconds in their execution time.

\noindent
\textbf{Contributions.}  In this paper we introduce an equational
deductive system for linear $\lambda$-calculus in which equations are
labelled by elements of a quantale $\mathcal{V}$.  By using key
features of a quantale's structure, we show that this deductive system
is \emph{sound and complete} for a class of enriched symmetric
monoidal closed categories (\ie enriched autonomous categories).  In
particular, if we fix $\mathcal{V}$ to be the Boolean quantale this
class of categories consists of autonomous categories enriched over
partial orders. If we fix $\mathcal{V}$ to be the (ultra)metric
quantale, this class of categories consists of autonomous categories
enriched over (ultra)metric spaces. The aforementioned example of wait
calls fits in the setting in which $\mathcal{V}$ is the metric
quantale.  Our result provides this example with a sound and complete
metric equational system, where the models are all those autonomous
categories enriched over metric spaces that can soundly interpret the
axioms of wait calls~\eqref{ax}.

The next contribution of our paper falls in one of the major topics of
categorical logic: to establish \emph{logical descriptions} of certain
classes of categories. A famous result of this kind is the
correspondence between $\lambda$-calculus and Cartesian closed
categories which states that the former is the \emph{internal
  language} of the latter~\cite{lambek88} -- such a correspondence
allows to study Cartesian closed categories by means of logical
tools. An analogous result is presented in~\cite{mackie93,maietti05}
for linear $\lambda$-calculus and symmetric monoidal closed (\ie
autonomous) categories. We show that linear $\lambda$-calculus
equipped with a $\mathcal{V}$-equational system is the internal
language of autonomous categories enriched over `generalised metric
spaces'.

\noindent
\textbf{Outline.}  Section~\ref{sec:internal} recalls linear
$\lambda$-calculus, its equational system, and the famous
correspondence to autonomous categories, via soundness, completeness,
and internal language theorems.  The contents of this section are
slight adaptations of results presented in~\cite{mackie93,benton92},
the main difference being that we forbid the exchange rule to be
explicitly part of linear $\lambda$-calculus (instead it is only
admissible). This choice is important to ensure that judgements in the
calculus have \emph{unique} derivations, which allows us to refer to
their interpretations unambiguously~\cite{shulman19}.
Section~\ref{sec:main} presents the main contributions of this
paper. It walks a path analogous to Section~\ref{sec:internal}, but
now in the setting of $\mathcal{V}$-equations (\ie equations labelled
by elements of a quantale $\mathcal{V}$). As we will see, the semantic
counterpart of moving from equations to $\mathcal{V}$-equations is to
move from categories to categories enriched over
\emph{$\mathcal{V}$-categories}. The latter, often regarded as
generalised metric spaces, are central entities in a fruitful area of
enriched category theory that aims to treat uniformly different kinds
of `structured sets', such as partial orders, fuzzy partial orders,
and (ultra)metric spaces~\cite{lawvere73,stubbe14,velebil19}. Our
results are applicable to all these
examples. Section~\ref{sec:examples} presents some examples of
$\mathcal{V}$-equational axioms and corresponding
models. Specifically, we will revisit the axioms of wait
calls~\eqref{ax} and consider an inequational variant. Then we will
study a metric axiom for probabilistic programs and show that the
category of Banach spaces and short linear maps is a model for the
resulting metric theory. We will additionally use this example to
illustrate how our deductive system allows to compute an
approximate distance between two probabilistic programs easily as opposed to
computing an exact distance `semantically' which tends to involve
quite complex operators.  Finally, Section~\ref{sec:concl} establishes
a functorial connection between our results and previous well-known
semantics for linear logic~\cite{paiva99, maietti05}, and concludes
with a brief exposition of future work.  We assume knowledge of
$\lambda$-calculus and category
theory~\cite{mackie93,maietti05,lambek88,maclane98}. Proofs omitted
in the main text are available in the appendix.


\noindent
\textbf{Related work.}  Several approaches to incorporating
quantitative information to programming languages have been explored
in the literature. Closest to this work are various approaches
targeted at $\lambda$-calculi.
In~\cite{crubille2015metric,crubille2017metric} a notion of distance
called \emph{context distance} is developed, first for an affine, then
for a more general $\lambda$-calculus, with probabilistic programs as
the main motivation. \cite{gavazzo18} considers a notion of
quantale-valued \emph{applicative (bi)similarity}, an operational
\emph{coinductive} technique used for showing contextual equivalence
between two programs.  Recently, \cite{pistone21} presented several
Cartesian closed categories of generalised metric spaces that provide
a quantitative semantics to simply-typed $\lambda$-calculus based on a
generalisation of logical relations.  None of these examples reason
about distances in a quantitative equational system, and in this
respect our work is closer to the metric universal algebra developed
in \cite{mardare16,mardare17}.

A different approach consists in encoding quantitative information via
a type system.  In particular, graded (modal) types
\cite{girard1992bounded,gaboardi2016combining,orchard2019quantitative}
have found applications in \eg differential privacy~\cite{reed10} and
information flow \cite{abadi1999core}.  This approach is to some
extent orthogonal to ours as it mainly aims to model coeffects, whilst
we aim to reason about the intrinsic quantitative nature of
$\lambda$-terms acting \eg on continuous or ordered spaces.

Quantum programs provide an interesting example of intrinsically
quantitative programs, by which we mean that the metric structure on
quantum states does not arise from (co)effects.  Recently,
\cite{hung19} showed how the issue of noise in a quantum
while-language can be handled by developing a \emph{deductive system}
to determine how similar a quantum program is from its idealised,
noise-free version; an approach very much in the spirit of this work.

\section{An internal language for autonomous categories}
\label{sec:internal}
In this section we briefly recall linear $\lambda$-calculus, which can
be regarded as a term assignment system for the exponential free,
multiplicative fragment of \emph{intuitionistic linear logic}. Then we
recall that it is sound and complete w.r.t. autonomous categories, and
also that it is an internal language for such categories. We
mention only what is needed to present
our results, the interested reader will find a more detailed
exposition in~\cite{mackie93,benton92,maietti05}. Let
us start by fixing a \emph{class} $G$ of ground types.  The grammar of
types for linear $\lambda$-calculus is given by: \vspace{-1.85pt}
\[
  \typeA ::=  X \in G \mid \typeI \mid 
  \typeA \otimes \typeA \mid \typeA \multimap \typeA
\] 
We also fix a class $\Sigma$ of sorted operation symbols
$f : \typeA_1,\dots,\typeA_n \to \typeA$ with $n \geq 1$.  As usual,
we use Greek letters $\Gamma,\Delta, E,\dots$ to denote \emph{typing
  contexts}, \ie lists $x_1 : \typeA_1,\dots,x_n : \typeA_n$ of typed
variables such that each variable $x_i$ occurs at most once in
$x_1,\dots,x_n$.

We will use the notion of a shuffle for building a linear typing
system such that the \emph{exchange rule} is admissible and each
judgement $\Gamma \vljud v : \typeA$ (details about these below) has a
\emph{unique derivation} -- this will allow us to refer to a
judgement's denotation $\sem{\Gamma \vljud v : \typeA}$ unambiguously.
By shuffle we mean a permutation of typed variables in a context
sequence $\Gamma_1,\dots,\Gamma_n$ such that for all $i \leq n$ the
relative order of the variables in $\Gamma_i$ is
preserved~\cite{shulman19}. For example, if
$\Gamma_1 = x : \typeA, y : \typeB$ and $\Gamma_2 = z : \typeC$ then
$z : \typeC, x : \typeA, y : \typeB$ is a shuffle but
$y : \typeB, x : \typeA, z : \typeC$ is \emph{not}, because we changed
the order in which $x$ and $y$ appear in $\Gamma_1$.  As explained
in~\cite{shulman19} (and also in the proof of
Lemma~\ref{lem:unique}), such a restriction on relative orders is
crucial for judgements having unique derivations.  We denote by
$\Shuff(\Gamma_1;\dots;\Gamma_n)$ the set of shuffles on
$\Gamma_1,\dots,\Gamma_n$.

The term formation rules of linear $\lambda$-calculus are listed in
Fig.~\ref{fig:lang}. They correspond to the natural deduction rules of
the exponential free, multiplicative fragment of intuitionistic linear
logic. 

\begin{lemma}\label{lem:unique}
  All judgements $\Gamma \vljud v : \typeA$ have a unique derivation.
\end{lemma}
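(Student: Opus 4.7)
The plan is to proceed by induction on the structure of the term $v$, showing in each case that both the last rule of the derivation and all its premises are uniquely determined. The shape of $v$ immediately pins down the last rule, since the term constructors (variable, $\ast$ for $\typeI$, $v_1 \otimes v_2$, $\lambda x. w$, application, the various $\prog{let}$-binders, and operation symbols $f(v_1,\dots,v_n)$) are in bijection with the introduction/elimination rules of Fig.~\ref{fig:lang}. So the only work is to show that, having fixed the last rule, its premises are forced.

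For the rules without any context splitting (axiom, $\typeI$-introduction, and $\multimap$-introduction) the premise contexts are read off directly from $\Gamma$ and the bound variable, so uniqueness is immediate. The delicate cases are the rules that glue two subderivations with contexts $\Gamma_1$ and $\Gamma_2$ via a shuffle $\Gamma \in \Shuff(\Gamma_1;\Gamma_2)$, namely tensor introduction, application, the $\otimes$- and $\typeI$-eliminations, and the operation symbols with $n \geq 2$. Here, given the subterms $v_1$ and $v_2$ appearing in $v$, linearity forces every variable of $\Gamma$ to occur free in exactly one of $v_1, v_2$, so $FV(v_1)$ and $FV(v_2)$ partition the variables of $\Gamma$. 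The sub-contexts $\Gamma_1$ and $\Gamma_2$ must therefore be the restrictions of $\Gamma$ to these free-variable sets, and the shuffle condition, which preserves the relative order of variables within each $\Gamma_i$, guarantees that the order of $\Gamma_i$ is exactly the order these variables inherit from $\Gamma$. Hence the split $(\Gamma_1,\Gamma_2)$ is uniquely recoverable from $\Gamma$ and $v$. The inductive hypothesis then gives unique derivations of the two premises.

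The main obstacle, conceptually, is this reconstruction of the split: if arbitrary permutations were allowed in place of shuffles, we could reorder variables inside $\Gamma_i$ and obtain genuinely distinct derivations of the same judgement, exactly the phenomenon that an explicit exchange rule would introduce. Ruling this out is precisely why the exchange rule is kept only admissible and the premise contexts are combined through $\Shuff(\Gamma_1;\Gamma_2)$; with this discipline in place, the induction goes through routinely and yields the stated uniqueness.
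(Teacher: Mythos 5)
Your proof is correct and follows essentially the same route as the paper's: induction on the term structure, observing that the last rule is determined by the head constructor and that the only freedom lies in the context split, which is then pinned down by linearity (the variable sets of the $\Gamma_i$ are forced) together with the shuffle condition (which fixes the relative order within each $\Gamma_i$). The paper phrases this last step as comparing two hypothetical families $(\Gamma_i)$ and $(\Gamma'_i)$ rather than reconstructing each $\Gamma_i$ as the restriction of $\Gamma$ to $FV(v_i)$, but the underlying argument is the same.
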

Substitution is defined in the expected
way,  and the following result is standard.

\begin{lemma}[Exchange and Substitution]\label{lem:exch_subst}
  For every judgement
  $\Gamma, x : \typeA, y : \typeB, \Delta \vljud v : \typeC$ we can derive
  $\Gamma, y : \typeB, x : \typeA, \Delta \vljud v : \typeC$.
  For all judgements
  $\Gamma, x : \typeA \vljud v : \typeB$ and
  $\Delta \vljud w : \typeA$ we can derive
  $\Gamma,\Delta \vljud v[w/x] : \typeB$.
\end{lemma}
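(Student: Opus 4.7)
The plan is to establish both parts by induction on derivations, proving exchange first and invoking it in the proof of substitution. Throughout, the essential technical device is to exploit the flexibility afforded by shuffles, using Lemma~\ref{lem:unique} to pin down the structure of the given derivation.

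For exchange, I would induct on the derivation of $\Gamma, x:\typeA, y:\typeB, \Delta \vljud v : \typeC$. The variable rule is vacuous, since its context has length one. In the abstraction rule the swapped pair sits in the premise context, so the induction hypothesis applies directly. The subtle cases are those rules whose conclusion context arises as a shuffle of sub-premise contexts, namely tensor introduction and elimination, linear application, and operation symbol application. Here the derivation specifies a particular shuffle $\sigma \in \Shuff(\Gamma_1;\dots;\Gamma_n)$ that places the adjacent pair $x, y$ at certain positions. If $x$ and $y$ originate from the same sub-premise, then they are adjacent in that sub-premise, the induction hypothesis swaps them there, and the same $\sigma$ realises the desired conclusion. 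If they originate from different sub-premises, then the shuffle $\sigma'$ obtained from $\sigma$ by transposing their positions is still a valid shuffle of the same sub-premises, precisely because $\Shuff$ only constrains relative orders within each $\Gamma_i$ separately.

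For substitution, I would induct on the derivation of $\Gamma, x : \typeA \vljud v : \typeB$. In the variable case $v = x$, linearity forces $\Gamma$ to be empty and $v[w/x] = w$, so the conclusion is precisely the hypothesis $\Delta \vljud w : \typeA$. In the abstraction rule and other rules that merely extend a single premise context, the induction hypothesis composes straightforwardly. For a rule whose context arises from a shuffle, linearity guarantees that $x$ occurs in exactly one sub-premise; I apply the induction hypothesis to that sub-premise together with $\Delta \vljud w : \typeA$, obtaining a judgement whose context extends the original sub-premise by $\Delta$ at the right end. I then assemble a new shuffle which inserts the variables of $\Delta$ into the slots formerly occupied by $x$ and apply the same rule. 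Finally, if the resulting context is not already $\Gamma, \Delta$ in the declared order, I invoke the exchange part just established to realign it.

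The principal obstacle is administrative rather than conceptual: each inductive step for a shuffle-based rule demands constructing a new shuffle and verifying that it respects the relative order of variables within each original sub-premise, then coordinating this construction with repeated appeals to exchange to recover the declared form of the final context. The proof is essentially a disciplined bookkeeping exercise, made possible by the very definition of $\Shuff$.
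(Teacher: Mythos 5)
Your proposal is correct and follows essentially the same route as the paper's proof: induction on derivations, with the exchange case split according to whether the adjacent pair $x,y$ comes from one sub-premise (where adjacency is inherited and the induction hypothesis applies) or from two distinct sub-premises (where the transposed context is still a valid shuffle of the same premises), and with substitution locating by linearity the unique sub-premise ending in $x$, applying the induction hypothesis, reassembling a shuffle with $\Delta$ inserted, and appealing to the just-proved exchange to realign the final context. The paper's appendix proof is exactly this argument, worked out for the rule $\rulename{ax}$.
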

We now recall the interpretation of judgements
$\Gamma \vljud v: \typeA$ in a symmetric monoidal closed (autonomous)
category $\catC$. But before proceeding with this description, let us
fix notation for some of the constructions available in autonomous
categories. For all $\catC$-objects $X,Y,Z$,
$\sw : X \otimes Y \to Y \otimes X$ denotes the symmetry morphism,
$\lambda : \typeI \otimes X \to X$ the left unitor,
$\app : (X \multimap Y) \otimes X \to Y$ the application, and
$\alpha : X \otimes (Y \otimes Z) \to (X \otimes Y) \otimes Z$ the
left associator. Moreover for all $\catC$-morphisms
$f : X \otimes Y \to Z$ we denote the corresponding curried version
(right transpose) by $\overline{f} : X \to (Y \multimap Z)$.

For all ground types $X \in G$ we postulate an interpretation
$\sem{X}$ as a $\catC$-object.  Types are then interpreted by
induction over the type structure of linear $\lambda$-calculus, using
the tensor $\otimes$ and exponential $\multimap$ constructs of
autonomous categories. Given a non-empty context
$\Gamma = \Gamma', x : \typeA$, its interpretation is defined by
$\sem{\Gamma', x : \typeA} = \sem{\Gamma'} \otimes \sem{\typeA}$ if
$\Gamma'$ is non-empty and $\sem{\Gamma', x : \typeA} = \sem{\typeA}$
otherwise. The empty context $-$ is interpreted as $\sem{-} = \typeI$
where $\typeI$ is the unit of $\otimes$ in $\catC$.  To keep notation
simple, given $X_1,\dots,X_n \in \catC$ we write
$X_1 \otimes \dots \otimes X_n$ for the $n$-tensor
$(\dots (X_1 \otimes X_2) \otimes \dots ) \otimes X_n$, and similarly
for $\catC$-morphisms.

We will also need some `housekeeping' morphisms to handle 
interactions between context interpretation and the autonomous
structure of $\catC$. Specifically, given contexts
$\Gamma_1,\dots,\Gamma_n$ we denote by
$\spl_{\Gamma_1; \dots;\Gamma_n} : \sem{\Gamma_1, \dots, \Gamma_n} \to
\sem{\Gamma_1} \otimes \dots \otimes \sem{\Gamma_n}$ the morphism that
splits $\sem{\Gamma_1, \dots, \Gamma_n}$ into
$\sem{\Gamma_1} \otimes \dots \otimes \sem{\Gamma_n}$, and by
$\join_{\Gamma_1;\dots;\Gamma_n}$ the corresponding inverse. Given a
context $\Gamma, x : \typeA, y : \typeB, \Delta$ we denote by
$\exch_{\Gamma, \underline{x : \typeA, y : \typeB}, \Delta} :
\sem{\Gamma, x : \typeA, y : \typeB, \Delta} \to \sem{\Gamma, y :
  \typeB, x : \typeA, \Delta}$ the morphism corresponding to the
permutation of the variable $x : \typeA$ with $y : \typeB$.  Whenever
convenient we will drop variable names in the subscripts of $\spl$,
$\join$, and $\exch$. For a context
$E \in \Shuff(\Gamma_1,\dots,\Gamma_n)$ the morphism
$\sh_E : \sem{E} \to \sem{\Gamma_1, \dots, \Gamma_n} $ denotes the
corresponding shuffling morphism.

For every operation symbol $f : \typeA_1, \dots, \typeA_n \to \typeA$
in $\Sigma$ we postulate an interpretation
$\sem{f} : \sem{\typeA_1} \otimes \dots \otimes \sem{\typeA_n} \to
\sem{\typeA}$ as a $\catC$-morphism. The interpretation of judgements
is defined by induction over the structure of judgement
derivation according to the rules in Fig.~\ref{fig:lang_sem}.

\begin{figure*}[h!]
    \begin{flalign*}
      \infer[\rulename{ax}]{E
        \vljud f(v_1,\dots,v_n) : \typeA}
      {\Gamma_i \vljud v_i : \typeA_i
      \quad f : \typeA_1,\dots,\typeA_n \to \typeA \in \Sigma
      \quad E \in \Shuff(\Gamma_1;\dots;\Gamma_n) }
      \hspace{0.5cm}
      \infer[\rulename{hyp}]{x : \typeA \vljud x : \typeA}{}
    \end{flalign*}
    \vspace{-0.7cm}
    \begin{flalign*}
      \hspace{1.8cm}
      \infer[\rulename{\typeI_i}]{- \vljud \ast : \typeI}{}
      \hspace{2cm}
      \infer[\rulename{\typeI_e}]{E  \vljud v \ \prog{to}\ \ast.\ w
      : \typeA}
      {\Gamma \vljud v : \typeI \quad \Delta 
        \vljud w : \typeA \quad E \in \Shuff(\Gamma;\Delta)}
    \end{flalign*}
    \vspace{-0.7cm}
    \begin{flalign*}
      \hspace{3cm}
      \infer[\rulename{\otimes_i}]{E \vljud v \otimes w : \typeA \otimes
        \typeB}{\Gamma \vljud v : \typeA \quad \Delta \vljud w : \typeB
      \quad E \in \Shuff(\Gamma;\Delta)}
    \end{flalign*}
    \vspace{-0.7cm}
    \begin{flalign*}
      \hspace{2cm}
      \infer[\rulename{\otimes_e}]{E \vljud \prog{pm}\ v\ \prog{to}\
      x \otimes y.\ w : \typeC}
      {\Gamma \vljud v : \typeA \otimes \typeB
      \quad \Delta , x : \typeA, y : \typeB \vljud w : \typeC
      \quad E \in \Shuff(\Gamma;\Delta)}
    \end{flalign*}
    \vspace{-0.7cm}
    \begin{flalign*}
      \infer[\rulename{\multimap_i}]{\Gamma \vljud \lambda x : \typeA . \,
        v : \typeA
      \multimap \typeB}
      {\Gamma, x : \typeA \vljud v : \typeB}
      \hspace{1cm}
      \infer[\rulename{\multimap_e}]{E \vljud v \, w : \typeB}
      {\Gamma \vljud  v : \typeA \multimap \typeB \quad
        \Delta \vljud  w : \typeA \quad E \in \Shuff(\Gamma;\Delta)}      
    \end{flalign*}
  \vspace{-16pt}
  \caption{Term formation rules for linear $\lambda$-calculus.}
  \label{fig:lang}
\end{figure*}
\vspace{-3pt}

\begin{figure*}[h!]
    \begin{flalign*}
      \infer[]{\sem{E \vljud f(v_1,\dots,v_n) : \typeA} = \sem{f} \comp
      (m_1 \otimes \dots \otimes m_n) \comp \spl_{\Gamma_1;\dots;\Gamma_n}
      \comp \sh_E}
      {\sem{\Gamma_i \vljud v_i : \typeA_i} = m_i
      \quad f : \typeA_1,\dots,\typeA_n \to \typeA \in \Sigma
      \quad E \in \Shuff(\Gamma_1 \dots \Gamma_n) }
      \hspace{0.35cm}
      \infer[]{\sem{x : \typeA \vljud x : \typeA} = \id_{\sem{\typeA}}}{}
    \end{flalign*}
    \vspace{-0.7cm}
    \begin{flalign*}
      \hspace{0.6cm}
      \infer[]{\sem{- \vljud \ast : \typeI} = \id_{\sem{\typeI}}}{}      
      \hspace{2cm}
      \infer[]{\sem{E  \vljud v \ \prog{to}\ \ast.\ w : \typeA} =
        n \comp \lambda \comp
      (m \otimes \id) \comp \spl_{\Gamma; \Delta} \comp \sh_E}
      {\sem{\Gamma \vljud v : \typeI} = m \quad \sem{\Delta 
          \vljud w : \typeA} = n \quad E \in \Shuff(\Gamma;\Delta)}
    \end{flalign*}
    \vspace{-0.7cm}
    \begin{flalign*}
      \infer[]{\sem{E \vljud v \otimes w : \typeA \otimes \typeB} =
        (m \otimes n) \comp \spl_{\Gamma;\Delta}
        \comp \sh_E}{\sem{\Gamma \vljud v : \typeA} = m
      \quad \sem{\Delta \vljud w : \typeB} = n
      \quad E \in \Shuff(\Gamma;\Delta)} \hspace{0.1cm}
          \infer[]{\sem{\Gamma \vljud \lambda x : \typeA . \, v : \typeA \multimap \typeB} =
        \overline{ (m \comp \join_{\Gamma; \typeA}) }}
      {\sem{\Gamma, x : \typeA \vljud v : \typeB} = m }
      \hspace{0cm}
    \end{flalign*}
    \vspace{-0.7cm}
    \begin{flalign*} \hspace{1.7cm}
      \infer[]{\sem{E \vljud \prog{pm}\ v\ \prog{to}\ x \otimes y.\
          w : \typeC} =
        n \comp \join_{\Delta; \typeA; \typeB}
        \comp \alpha \comp \sw \comp (m \otimes \id) \comp
        \spl_{\Gamma;\Delta} \comp \sh_E}
      {\sem{\Gamma \vljud v : \typeA \otimes \typeB} = m
      \quad \sem{\Delta , x : \typeA, y : \typeB \vljud w : \typeC} = n
      \quad E \in \Shuff(\Gamma;\Delta)}
    \end{flalign*}
    \vspace{-0.7cm}
    \begin{flalign*} \hspace{2.7cm}
      \infer[]{\sem{E \vljud v \, w : \typeB} = \app \comp (m \otimes n)
        \comp \spl_{\Gamma;\Delta} \comp \sh_E}
      {\sem{\Gamma \vljud  v : \typeA \multimap \typeB} = m \quad
      \sem{\Delta \vljud  w : \typeA} = n \quad E \in \Shuff(\Gamma;\Delta)}     
      \end{flalign*}
  	\vspace{-0.8cm}
  \caption{Judgement interpretation on an autonomous category $\catC$.}
  \label{fig:lang_sem}
\end{figure*}
As detailed in~\cite{benton92,mackie93,maietti05}, linear
$\lambda$-calculus comes equipped with a class of equations
(Fig.~\ref{fig:eqs}), specifically \emph{equations-in-context}
$\Gamma \vljud v = w : \typeA$, that corresponds to the axiomatics of
autonomous categories.  As usual, we omit the context and typing
information of the equations in Fig.~\ref{fig:eqs}, which can be
reconstructed in the usual way. 
\begin{figure*}[h!]
  \begin{subfigure}[large]{0.6\textwidth}
  \scalebox{1.12}{  
  \begin{tabular}{r c l}
    $\prog{pm}\ v \otimes w\ \prog{to}\ x \otimes y.\ u$ &
    $=$ & $u[v/x,w/y]$ \\
    $\prog{pm}\ v\ \prog{to}\ x \otimes y.\
    u[x \otimes y / z]$ &
    $=$ & $u[v/z]$ \\
    $\ast\ \prog{to}\ \ast.\ v$ & $=$ & $v$ \\
    $v\ \prog{to}\ \ast.\ w[\ast/ z]$ & $=$ & $w[v/z]$
  \end{tabular}}
  \vspace{-3pt}
  \caption{Monoidal structure}
  \end{subfigure}
  \vspace{0.2cm}
  \begin{subfigure}{0.3\textwidth}
  \scalebox{1.12}{  
  \begin{tabular}{r c l}
      $(\lambda x : \typeA .\ v)\ w$ &$=$& $v[w/x]$ \\
      $\lambda x : \typeA . (v\ x)$ &$=$& $v$
  \end{tabular}}
  \caption{Higher-order structure}
  \end{subfigure}
  \begin{subfigure}{0.7\textwidth}
  \scalebox{1.12}{
  \begin{tabular}{r c l}
    $u[v\ \prog{to} \ast.\ w/z]$ &$=$& $v\ \prog{to}\ \ast.\
    u[w/z]$ \\
    $u[\prog{pm}\ v\ \prog{to}\ x \otimes y.\ w/z]$
    &$=$& $\prog{pm}\ v\ \prog{to}\ x \otimes y.\ u[w/z]$
  \end{tabular}}
  \caption{Commuting conversions}
  \end{subfigure}
  \caption{Equations corresponding to the axiomatics of autonomous
    categories.}
  \label{fig:eqs}
  \vspace{-7pt}
\end{figure*}

\begin{theorem}
  \label{theo:bsound}
  The equations presented in Fig.~\ref{fig:eqs} are sound w.r.t.
  judgement interpretation. Specifically if
  $\Gamma \vljud v = w : \typeA$ is one of the equations in
  Fig.~\ref{fig:eqs} then
  $\sem{\Gamma \vljud v : \typeA} = \sem{\Gamma \vljud w : \typeA}$.
\end{theorem}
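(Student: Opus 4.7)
The plan is to verify each equation in Fig.~\ref{fig:eqs} separately by unfolding the two interpretations using the rules of Fig.~\ref{fig:lang_sem} and reducing the resulting equality to the axioms of an autonomous category: naturality and coherence of $\sw$, $\lambda$, $\alpha$; the $\beta\eta$ triangle identities of the adjunction $- \otimes X \dashv X \multimap -$; and the coherence of the housekeeping isomorphisms $\spl$, $\join$, $\sh$, $\exch$. By Lemma~\ref{lem:unique} each side of an equation has a single, well-defined denotation, so the theorem reduces to a finite case analysis over the equations listed in the figure.

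The main preliminary step is a \emph{semantic substitution lemma}, the denotational counterpart of Lemma~\ref{lem:exch_subst}: for judgements $\Gamma, x : \typeA \vljud v : \typeB$ and $\Delta \vljud w : \typeA$, the interpretation of $\Gamma, \Delta \vljud v[w/x] : \typeB$ factors as $\sem{\Gamma, x : \typeA \vljud v : \typeB}$ composed with $\id_{\sem{\Gamma}} \otimes \sem{\Delta \vljud w : \typeA}$, modulated by appropriate $\spl$/$\join$ isomorphisms. I would prove this by induction on the derivation of $v$, and in parallel establish a semantic exchange lemma stating that permuting two adjacent variables precomposes the denotation with the corresponding $\exch$ morphism. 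I expect this to be the main technical obstacle: each rule of Fig.~\ref{fig:lang_sem} introduces a shuffle $\sh_E$, and one must verify that the shuffles arising on the two sides of an equation factor coherently through one another using the symmetric monoidal axioms.

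With these two lemmas in hand, the three blocks of equations are handled uniformly. The monoidal $\beta$-laws such as $\prog{pm}\ v \otimes w\ \prog{to}\ x \otimes y.\ u = u[v/x,w/y]$ and $\ast\ \prog{to}\ \ast.\ v = v$ unfold to composites of $\sw$, $\alpha$, $\lambda$ and splitting/joining isomorphisms that collapse by Mac~Lane's coherence theorem for symmetric monoidal categories together with the substitution lemma. The corresponding $\eta$-laws reduce to naturality of the unitor and the associator. The higher-order equations $(\lambda x.\,v)\,w = v[w/x]$ and $\lambda x.\,(v\,x) = v$ are exactly the two triangle identities of the adjunction $- \otimes \sem{\typeA} \dashv \sem{\typeA} \multimap -$, again after applying the substitution lemma on the left-hand side of the $\beta$ case. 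The commuting conversions are then the most direct: both sides denote the same morphism once $z$ is replaced, by another invocation of the substitution lemma.

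The only genuinely delicate ingredient throughout is the bookkeeping around the shuffling morphisms $\sh_E$, but since shuffles are generated by adjacent transpositions realised by the symmetry $\sw$, their coherence follows from the symmetric monoidal coherence theorem, and each remaining check reduces to a routine diagram chase in $\catC$.
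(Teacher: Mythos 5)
Your proposal matches the paper's own proof: the paper likewise reduces Theorem~\ref{theo:bsound} to a semantic exchange-and-substitution lemma (its Lemma~\ref{lem:exch_subst_inter}, proved by induction on derivations) combined with naturality and the coherence theorem for symmetric monoidal categories, and then checks the equations of Fig.~\ref{fig:eqs} case by case, with the shuffle/housekeeping morphisms handled exactly as you describe. No substantive difference in approach.
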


\begin{definition}[Linear $\lambda$-theories]\label{defn:theory}
  Consider a tuple $(G,\Sigma)$ consisting of a class $G$ of ground
  types and a class $\Sigma$ of sorted operation symbols.  A linear
  $\lambda$-theory $((G,\Sigma),Ax)$ is a triple such that $Ax$ is a
  class of equations-in-context over linear $\lambda$-terms built from
  $(G,\Sigma)$.
\end{definition}
The elements of $Ax$ are called axioms (of the theory). Let $Th(Ax)$
be the smallest congruence that contains $Ax$, the equations listed in
Fig.~\ref{fig:eqs}, and that is closed under the exchange and
substitution rules.  We call the elements of $Th(Ax)$ theorems (of the
theory).
  
\begin{definition}[Models of linear $\lambda$-theories] Consider a
  linear $\lambda$-theory $((G,\Sigma),Ax)$ and an autonomous category
  $\catC$. Suppose that for each $X \in G$ we have an interpretation
  $\sem{X}$ that is a $\catC$-object and analogously for the operation
  symbols. This interpretation structure is a \emph{model} of the
  theory if all axioms are satisfied by the interpretation.
\end{definition}

Next let us turn our attention to the correspondence between linear
$\lambda$-calculus and autonomous categories, established via
soundness, completeness, and internal language theorems. Despite the
proofs of such theorems already being detailed
in~\cite{mackie93,benton92,maietti05}, we decided to briefly sketch
them below to render the presentation of some of our own results
self-contained.

\begin{theorem}[Soundness \& Completeness]~\label{theo:sound_compl}
  Consider a linear $\lambda$-theory $T$. An equation
  $\Gamma \vljud v = w : \typeA$ is a theorem of $T$ iff it
  is satisfied by all models of the theory.
\end{theorem}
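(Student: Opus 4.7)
The proof follows the well-trodden pattern for categorical completeness of typed $\lambda$-calculi, split into a soundness direction and a completeness direction via a syntactic (term) model.

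\textbf{Soundness.} The plan is to prove by induction on the derivation of $\Gamma \vljud v = w : \typeA$ in $Th(Ax)$ that any model $\sem{-}$ satisfies $\sem{\Gamma \vljud v : \typeA} = \sem{\Gamma \vljud w : \typeA}$. The base cases are: (i) equations of the form $v = v$, which are immediate from Lemma~\ref{lem:unique} (unique derivations make the interpretation well-defined); (ii) the equations of Fig.~\ref{fig:eqs}, which are covered by Theorem~\ref{theo:bsound}; and (iii) the axioms in $Ax$, which hold by definition of a model. The inductive cases are the usual congruence rules (symmetry, transitivity, and compatibility with each term former), plus closure under exchange and substitution. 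The congruence cases reduce to functoriality of $\otimes$ and $\multimap$ in the autonomous structure, using the compositional clauses of Fig.~\ref{fig:lang_sem}. For the exchange closure, I would prove a semantic counterpart of Lemma~\ref{lem:exch_subst}, namely $\sem{\Gamma, y : \typeB, x : \typeA, \Delta \vljud v : \typeC} = \sem{\Gamma, x : \typeA, y : \typeB, \Delta \vljud v : \typeC} \comp \exch^{-1}$, proved by induction on the derivation. For the substitution closure, I would prove the substitution lemma $\sem{\Gamma, \Delta \vljud v[w/x] : \typeB} = \sem{\Gamma, x : \typeA \vljud v : \typeB} \comp \join \comp (\id \otimes \sem{\Delta \vljud w : \typeA}) \comp \spl$ (up to the housekeeping morphisms $\sh,\spl,\join$), again by induction on the derivation of the judgement containing $x$.

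\textbf{Completeness.} The plan here is the standard term-model construction. Given a theory $T = ((G,\Sigma),Ax)$, define a syntactic autonomous category $\catC_T$ whose objects are the types of linear $\lambda$-calculus and whose morphisms $\typeA \to \typeB$ are equivalence classes $[x : \typeA \vljud v : \typeB]_T$ of judgements under provable equality in $T$. Identity is $[x : \typeA \vljud x : \typeA]_T$ and composition is given by substitution, $[v] \comp [w] = [v[w/x]]$, well-defined and associative by the substitution lemma above and by closure of $Th(Ax)$ under substitution. I then need to verify that $\catC_T$ is autonomous, with $\otimes$ interpreted by $\otimes$ on types, unit by $\typeI$, and internal hom by $\multimap$; the required natural isomorphisms (symmetry, associator, left/right unitors, and the currying adjunction) are all witnessed by the term formers $\prog{pm}$-binding, $\ast$-binding, $\lambda$, and $\app$, and the commutations follow from the equations of Fig.~\ref{fig:eqs}. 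Interpreting each ground type $X \in G$ as itself and each operation symbol $f$ as $[\, x_1 : \typeA_1, \dots, x_n : \typeA_n \vljud f(x_1,\dots,x_n) : \typeA\,]_T$ yields a canonical model in $\catC_T$. The key technical step is the \emph{coherence identity} $\sem{\Gamma \vljud v : \typeA}_T = [\,\Gamma \vljud v : \typeA\,]_T$ (up to the isomorphism $\sem{\Gamma}_T \cong \sem{\typeA_1} \otimes \cdots \otimes \sem{\typeA_n}$), which I would prove by induction on the unique derivation of $\Gamma \vljud v : \typeA$. Given this, if $\Gamma \vljud v = w : \typeA$ holds in every model, it holds in $\catC_T$, so $[\Gamma \vljud v : \typeA]_T = [\Gamma \vljud w : \typeA]_T$, which is exactly provability in $T$.

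\textbf{Main obstacle.} The most delicate part is not the high-level bookkeeping but the compatibility between the shuffling structure ($\sh$, $\spl$, $\join$) and substitution/exchange. Because I have deliberately omitted an explicit exchange rule and have made derivations unique (Lemma~\ref{lem:unique}), the substitution lemma must carefully re-thread the contexts $\Gamma$ and $\Delta$ through shuffles; verifying coherence of all such rearrangements against the symmetric monoidal coherence of $\catC$ is the heart of the argument. Once that coherence lemma is in place, both soundness and the coherence identity in the term model fall out by routine induction, and the autonomous structure on $\catC_T$ follows from the equations of Fig.~\ref{fig:eqs}.
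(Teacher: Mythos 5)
Your proposal is correct and follows essentially the same route as the paper: soundness by induction over the rules generating $Th(Ax)$ using Theorem~\ref{theo:bsound} together with semantic exchange and substitution lemmas (the paper's Lemma~\ref{lem:exch_subst_inter}), and completeness via the syntactic category $\Syn(T)$ with types as objects, provability classes of terms $x : \typeA \vljud v : \typeB$ as morphisms, and composition by substitution. You also correctly isolate the coherence of the shuffling morphisms with substitution and exchange as the genuinely delicate point, which is exactly what the paper's appendix lemma addresses.
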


\begin{proof}[Proof sketch]
  Soundness follows by induction over the rules that define $Th(Ax)$
  (Definition~\ref{defn:theory}) and by
  Theorem~\ref{theo:bsound}. Completeness is based on the idea of a
  Lindenbaum-Tarski algebra: it follows from building the
  \emph{syntactic category} $\Syn(T)$ of $T$ (also known as term
  model), showing that it possesses an autonomous structure and
  also that equality
  $\sem{\Gamma \vljud v : \typeA} = \sem{\Gamma \vljud w : \typeA}$ in
  the syntactic category is equivalent to provability
  $\Gamma \vljud v = w : \typeA$ in the theory.

  The syntactic category of $T$ has as objects the
  types of $T$ and as morphisms $\typeA \to \typeB$ the equivalence
  classes (w.r.t. provability) of terms $v$ for which we can derive
  $x : \typeA \vljud v : \typeB$.
\end{proof}

Next let us focus on the topic of internal languages, for which the
following result is quite useful.

\begin{theorem}\label{theo:classb}
  Consider a linear $\lambda$-theory $T$ and a model of $T$ on an
  autonomous category $\catC$. The model induces a functor
  $\funF : \Syn(T) \to \catC$ that (strictly) preserves the autonomous
  structure.
\end{theorem}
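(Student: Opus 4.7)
The plan is to define $\funF : \Syn(T) \to \catC$ on objects by $\funF(\typeA) = \sem{\typeA}$ (using the model's interpretation of types) and on morphisms by sending the equivalence class of a term $v$ with $x : \typeA \vljud v : \typeB$ to $\sem{x : \typeA \vljud v : \typeB} : \sem{\typeA} \to \sem{\typeB}$. I would then verify in turn: well-definedness, functoriality, and strict preservation of the autonomous structure.

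First, for well-definedness, I would show that if $x : \typeA \vljud v = w : \typeB$ is a theorem of $T$, then $\sem{v} = \sem{w}$. Since $Th(Ax)$ is generated inductively, this is a routine induction: Theorem~\ref{theo:bsound} handles the axioms of Fig.~\ref{fig:eqs}; the hypothesis that we are working with a model handles the axioms in $Ax$; congruence closure is immediate from the compositional form of the interpretation in Fig.~\ref{fig:lang_sem}. The nontrivial case is closure under substitution, for which I would prove an auxiliary substitution lemma of the form $\sem{\Gamma,\Delta \vljud v[w/x] : \typeB} = \sem{\Gamma, x : \typeA \vljud v : \typeB} \comp (\id_{\sem{\Gamma}} \otimes \sem{\Delta \vljud w : \typeA}) \comp (\text{housekeeping})$, established by induction on the derivation of $\Gamma, x : \typeA \vljud v : \typeB$. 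Closure under exchange is analogous, using the naturality of $\sw$ and the definition of $\exch$.

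Next, functoriality is straightforward. The identity on $\typeA$ in $\Syn(T)$ is $[x]$ coming from $x : \typeA \vljud x : \typeA$, whose interpretation is $\id_{\sem{\typeA}}$ by the rule for $\rulename{hyp}$. Composition in $\Syn(T)$ of $[v] : \typeA \to \typeB$ with $[w] : \typeB \to \typeC$ is represented by $[w[v/x]]$, and $\sem{w[v/x]} = \sem{w} \comp \sem{v}$ is exactly the substitution lemma applied with empty surrounding contexts.

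Finally, for strict preservation of the autonomous structure, preservation on objects is immediate since $\funF(\typeA \otimes \typeB) = \sem{\typeA} \otimes \sem{\typeB} = \funF(\typeA) \otimes \funF(\typeB)$, $\funF(\typeA \multimap \typeB) = \funF(\typeA) \multimap \funF(\typeB)$, and $\funF(\typeI) = \typeI$ by the clauses defining $\sem{-}$ on types. For the structural morphisms, each (tensor of morphisms, associator, unitor, symmetry, application, transpose) is realised in $\Syn(T)$ by a canonical lambda term --- \emph{e.g.}, the symmetry is represented by $z : \typeA \otimes \typeB \vljud \prog{pm}\ z\ \prog{to}\ x \otimes y.\ y \otimes x : \typeB \otimes \typeA$, and the evaluation by $z : (\typeA \multimap \typeB) \otimes \typeA \vljud \prog{pm}\ z\ \prog{to}\ f \otimes x.\ f\,x : \typeB$. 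I would unfold the interpretation of each such representative using Fig.~\ref{fig:lang_sem}, simplify the housekeeping $\spl$, $\join$, and $\sh$ morphisms (which collapse to identities on single-variable contexts), and identify the resulting morphism with the corresponding autonomous-structure morphism of $\catC$. The transpose is preserved by construction since $\rulename{\multimap_i}$ is interpreted by $\overline{(-)}$.

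The main obstacle is the substitution lemma, which is genuinely technical because the interpretation rules interleave the structural morphisms $\spl$, $\join$, $\sh$ with the action of the term constructors, so the inductive step requires carefully tracking how a substitution reshuffles the context and commuting it past these housekeeping morphisms. Uniqueness of derivations (Lemma~\ref{lem:unique}) is essential here, as it ensures that $\sem{v[w/x]}$ is unambiguously defined and that the chosen shuffle at each step is canonical. Once this lemma is in place, the remaining verifications reduce to routine (if verbose) diagram chases.
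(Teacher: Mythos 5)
Your proposal is correct and follows essentially the same route as the paper: the paper's (very brief) proof sketch defines $\funF$ exactly as you do, sending $[v]$ to $\sem{x : \typeA \vljud v : \typeB}$, and the substitution lemma you identify as the technical core is precisely the paper's Lemma~\ref{lem:exch_subst_inter}, proved in the appendix by the same induction on derivations. Your write-up simply fills in the well-definedness, functoriality, and structure-preservation checks that the paper leaves implicit.
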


\begin{proof}[Proof sketch]
  Consider a model of $T$ on a category $\catC$. Then for any
  judgement $x : \typeA \vljud v : \typeB$, the induced functor
  $\funF$ sends the equivalence class $[v]$ into
  $\sem{ x : \typeA \vljud v :
    \typeB}$. 
\end{proof}

An autonomous category $\catC$ induces a linear $\lambda$-theory
$\Lang(\catC)$ whose ground types $X \in G$ are the objects of $\catC$
and whose signature $\Sigma$ of operation symbols consists of all the
morphisms in $\catC$ plus certain isomorphisms that we describe
in~\eqref{eq:iso}. The axioms of $\Lang(\catC)$ are all the equations
satisfied by the obvious interpretation in $\catC$.  In order to
explicitly distinguish the autonomous structure of $\catC$ from the
type structure of $\Lang(\catC)$ let us denote the tensor of $\catC$
by $\hat \otimes$, the unit by $\hat \typeI$, and the exponential by
$\widehat{\multimap}$. Consider then the following map on types:
\begin{flalign}\label{eq:iso}
    i(\typeI) = \hat \typeI \qquad i(X) = X \qquad i(\typeA \otimes
    \typeB) = i(\typeA)\ \hat \otimes\ i(\typeB) \qquad i(\typeA
    \multimap \typeB) = i(\typeA)\ \widehat{\multimap}\ i(\typeB)
\end{flalign}
For each type $\typeA$ we add an isomorphism $\typeA \simeq i(\typeA)$
to the theory $\Lang(\catC)$.

\begin{theorem}[Internal language]
  \label{theo:internalb}
  For every autonomous category $\catC$ there exists an equivalence of
  categories $\Syn(\Lang(\catC)) \simeq \catC$.
\end{theorem}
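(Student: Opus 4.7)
The plan is to construct two functors $\funF : \Syn(\Lang(\catC)) \to \catC$ and $\funG : \catC \to \Syn(\Lang(\catC))$ and exhibit them as inverse up to natural isomorphism. I expect $\funF \comp \funG$ to equal $\id_\catC$ strictly, while $\funG \comp \funF$ will be naturally isomorphic to the identity via the coercions $\typeA \simeq i(\typeA)$ that were deliberately added to $\Lang(\catC)$ in~\eqref{eq:iso} for precisely this purpose.

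The functor $\funF$ comes for free from Theorem~\ref{theo:classb}, since the canonical interpretation sending each ground type $X$ of $\Lang(\catC)$ to itself, each operation symbol $f$ to the corresponding $\catC$-morphism, and each added isomorphism $\typeA \simeq i(\typeA)$ to the evident coherence isomorphism in $\catC$, is by construction a model of $\Lang(\catC)$. For $\funG$, I would send each $\catC$-object $X$ to itself viewed as a ground type of $\Lang(\catC)$, and each $\catC$-morphism $f : X \to Y$ to the equivalence class $[x : X \vljud f(x) : Y]$. Functoriality of $\funG$ follows because the equations $(g \comp f)(x) = g(f(x))$ and $\id_X(x) = x$ hold under the canonical interpretation in $\catC$ and are therefore axioms of $\Lang(\catC)$. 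With these in place, $\funF \comp \funG = \id_\catC$ is immediate: on objects $\funF(\funG(X)) = \sem{X} = X$, and on a morphism $f : X \to Y$ one has $\funF(\funG(f)) = \sem{x : X \vljud f(x) : Y} = \sem{f} = f$.

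The hard part is the natural isomorphism $\funG \comp \funF \simeq \id_{\Syn(\Lang(\catC))}$. On objects, a straightforward induction on the type structure shows $\funG(\funF(\typeA)) = i(\typeA)$, and the coercion $\eta_\typeA : \typeA \to i(\typeA)$ added in~\eqref{eq:iso} supplies a candidate isomorphism. Naturality amounts to showing that for every morphism $[x : \typeA \vljud v : \typeB]$ of $\Syn(\Lang(\catC))$, the relevant square of terms commutes up to provability in $\Lang(\catC)$. Both sides of this square are interpreted in $\catC$ as $\sem{v}$ composed with the appropriate coherence isomorphisms, hence are equal in $\catC$, hence the equation is an axiom of $\Lang(\catC)$ by the definition of $Ax$. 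This is the main obstacle and it is precisely the reason for adding the family of isomorphisms $\typeA \simeq i(\typeA)$ as operation symbols rather than attempting to identify $\typeA$ with $i(\typeA)$ syntactically: the added coercions provide exactly the data needed to witness that $\funG \comp \funF$ is (only) naturally isomorphic to the identity, which gives the claimed equivalence.
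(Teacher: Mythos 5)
Your proposal is correct and follows essentially the same route as the paper: the forward functor via Theorem~\ref{theo:classb} applied to the canonical identity-on-generators model, the backward functor sending $f$ to $[f(x)]$, and the equivalence witnessed by the added coercions $\typeA \simeq i(\typeA)$, with all required term equations holding because any equation validated by the canonical interpretation is by definition an axiom of $\Lang(\catC)$. Your write-up simply makes explicit the details the paper leaves as a sketch (in particular that one composite is strictly the identity and the other is naturally isomorphic to it via $i$).
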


\begin{proof}[Proof sketch]
  By construction, we have an interpretation of $\Lang(\catC)$ in
  $\catC$ which behaves as the identity for operation symbols and
  ground types. This interpretation is a model of
  $\Lang(\catC)$ on $\catC$ and by
  Theorem~\ref{theo:classb} we obtain a functor
  $\Syn(\Lang(\catC)) \to \catC$. The functor in the opposite
  direction behaves as the identity on objects and sends a
  $\catC$-morphism $f$ into $[f(x)]$. The equivalence of categories is
  then shown by using the aforementioned isomorphisms which
  connect the type constructors of $\Lang(\catC)$ with the autonomous
  structure of $\catC$.
\end{proof}

\section{From equations to $\mathcal{V}$-equations}
\label{sec:main}

We now extend the results of the previous section to the
setting of $\mathcal{V}$-equations.

\subsection{A $\mathcal{V}$-equational deductive system}

Let $\mathcal{V}$ denote a commutative and unital quantale,
$\otimes : \mathcal{V} \times \mathcal{V} \to \mathcal{V}$ the
corresponding binary operation, and $k$ the corresponding
unit~\cite{paseka00}.  As mentioned in the introduction, $\mathcal{V}$
induces the notion of a $\mathcal{V}$-equation, \ie an equation
$t =_q s$ labelled by an element $q$ of $\mathcal{V}$. This
subsection explores this concept by introducing a
$\mathcal{V}$-equational deductive system for linear
$\lambda$-calculus and a notion of a linear
$\mathcal{V}\lambda$-theory.

Let us start by recalling two definitions concerning ordered
structures~\cite{GHK+03,JGL-topology} and then explain their relevance
to our work.

\begin{definition}
  Consider a complete lattice $L$.  For every $x, y \in L$ we say that
  $y$ is \emph{way-below} $x$ (in symbols, $y \ll x$) if for every
  subset $X \subseteq L$ whenever $x \leq \bigvee X$ there exists a
  \emph{finite} subset $A \subseteq X$ such that $y \leq \bigvee A$.
  The lattice $L$ is called \emph{continuous} iff for every $x \in L$,
  \begin{flalign*}
    x = \bigvee \{ y  \mid y \in L\ \text{and} \ y \ll x \}
  \end{flalign*}
\end{definition}

\begin{definition}
  Let $L$ be a complete lattice. A \emph{basis} $B$ of $L$ is a subset
  $B \subseteq L$ such that for every $x \in L$ the set
  $B \cap \{ y \mid y \in L\ \text{and} \ y \ll x \}$ is directed and
  has $x$ as the least upper bound.
\end{definition}
From now on we assume that the underlying lattice of $\mathcal{V}$
is continuous and has a basis
$B$ which is closed under finite joins, the multiplication of the quantale
$\otimes$ and contains the unit $k$. These assumptions will
allow us to work \emph{only} with a specified subset of
$\mathcal{V}$-equations chosen \eg for computational reasons, such as
the \emph{finite} representation of values $q \in \mathcal{V}$.
\begin{example}
  The Boolean quantale $((\{0 \leq 1\}, \vee), \otimes := \wedge)$ is
  \emph{finite} and thus continuous~\cite{GHK+03}. Since it is
  continuous, $\{0,1\}$ itself is a basis for the quantale that
  satisfies the conditions above. For the G\"{o}del
  t-norm~\cite{denecke13} $(([0,1], \vee), \otimes := \wedge)$, the
  way-below relation is the strictly-less relation $<$ with the
  exception that $0 < 0$. A basis for the underlying lattice that
  satisfies the conditions above is the set $\mathbb{Q} \cap
  [0,1]$. Note that, unlike real numbers, rationals numbers always
  have a finite representation. For the metric quantale (also known as
  Lawvere quantale) $(([0,\infty], \wedge), \otimes := +)$, the
  way-below relation corresponds to the \emph{strictly greater}
  relation with $\infty > \infty$, and a basis for the underlying
  lattice that satisfies the conditions above is the set of extended
  non-negative rational numbers. The latter also serves as basis for
  the ultrametric quantale $(([0,\infty], \wedge), \otimes := \max)$.
\end{example}
We also assume that $\mathcal{V}$ is \emph{integral}, \ie that the
unit $k$ is the top element of $\mathcal{V}$. This will allow us to
establish a smoother theory of $\mathcal{V}$-equations, whilst still
covering \eg\ all the examples above. This assumption is
common in quantale theory~\cite{velebil19}.

Recall the term formation rules of linear $\lambda$-calculus from
Fig.~\ref{fig:lang}. A $\mathcal{V}$-equation-in-context is an
expression $\Gamma \vljud v =_q w : \typeA$ with $q \in B$ (the basis
of $\mathcal{V}$), $\Gamma \vljud v : \typeA$ and
$\Gamma \vljud w : \typeA$. Let $\top$ be the top element in
$\mathcal{V}$. An equation-in-context $\Gamma \vljud v = w : \typeA$
now denotes the particular case in which both
$\Gamma \vljud v =_\top w : \typeA$ and
$\Gamma \vljud w =_\top v : \typeA$. For the case of the Boolean
quantale, $\mathcal{V}$-equations are labelled by $\{0,1\}$. We will
see that $\Gamma \vljud v =_1 w : \typeA$ can be treated as an
inequation $\Gamma \vljud v \leq w : \typeA$, whilst
$\Gamma \vljud v =_0 w : \typeA$ corresponds to a trivial
$\mathcal{V}$-equation, \ie a $\mathcal{V}$-equation that always
holds. For the G\"{o}del t-norm, we can choose $\mathbb{Q} \cap [0,1]$
as basis and then obtain what we call \emph{fuzzy inequations}. For
the metric quantale, we can choose the set of extended non-negative
rational numbers as basis and then obtain \emph{metric equations} in
the spirit of~\cite{mardare16,mardare17}. Similarly, by choosing the
ultrametric quantale $(([0,\infty], \wedge), \otimes := \max)$ with
the set of extended non-negative rational numbers as basis we obtain
what we call \emph{ultrametric equations}.

\begin{definition}[Linear $\mathcal{V}\lambda$-theories]\label{defn:vtheory}
  Consider a tuple $(G,\Sigma)$ consisting of a class $G$ of ground
  types and a class of sorted operation symbols
  $f : \typeA_1,\dots,\typeA_n \to \typeA$ with $n \geq 1$. A linear
  $\mathcal{V}\lambda$-theory $((G,\Sigma),Ax)$ is a tuple such that
  $Ax$ is a class of $\mathcal{V}$-equations-in-context over linear
  $\lambda$-terms built from $(G,\Sigma)$.
\end{definition}
\begin{figure*}[h!]
    \vspace{-16pt}
    \begin{flalign*}
      \infer[\textbf{(refl)}]{v =_\top v}{}
      \hspace{1.2cm}
      \infer[\textbf{(trans)}]{v =_{q \otimes r} u}{
        v =_q w  \qquad
        w =_r u}
      \hspace{1.2cm}
      \infer[\textbf{(weak)}]{v =_r w}{v =_q w \qquad r \leq q }
    \end{flalign*} \vspace{-0.85cm}
    \begin{flalign*}
      \hspace{2cm}
      \infer[\textbf{(arch)}]{v =_q w}{
        \forall r \ll q .\ v =_r w}
      \hspace{1.5cm}
      \infer[\textbf{(join)}]{v =_{\vee q_i} w}{\forall i \leq n.\ v =_{q_i} w}
    \end{flalign*} \hrule 
    \begin{flalign*}
      \hspace{2cm}
      \infer[]{f(v_1,\dots,v_n) =_{\otimes q_i} f(w_1,\dots,w_n)}
      {\forall i \leq n.\ v_i =_{q_i} w_i}
      \hspace{1cm}
      \infer[]{v \otimes v' =_{q \otimes r} w \otimes w'}{
        v =_q w \quad v' =_r w'}
    \end{flalign*} \vspace{-0.85cm}
    \begin{flalign*}
      \hspace{3.4cm}
      \infer[]{
      \prog{pm}\ v\ \prog{to}\ x \otimes y.\ v' =_{q \otimes r}
      \prog{pm}\ w\ \prog{to}\ x \otimes y.\ w'}
      {v =_q w \qquad v' =_r w' }
    \end{flalign*} \vspace{-0.85cm}
    \begin{flalign*}
      \infer[]{
      v\ \prog{to}\ \ast .\  v'=_{q \otimes r}
      w\ \prog{to}\ \ast .\ w'}
      {v =_q w \qquad v' =_r w'}
      \hspace{1.2cm}
      \infer[]{
      \lambda x : \typeA .\ v =_q  \lambda x :
      \typeA .\ w}{v =_q w}
      \hspace{1.2cm}
      \infer[]{v \, v' =_{q \otimes r} w \, w'}
      {v =_q w \quad v' =_r w'}
    \end{flalign*} \vspace{-0.85cm}
    \begin{flalign*} 
     \hspace{1.0cm}
      \infer{\Delta \vljud v =_q w : \typeA}{
      \Gamma \vljud v =_q w : \typeA \qquad \Delta \in
      perm(\Gamma)}
      \hspace{2.0cm}
      \infer[]{v[v'/x] =_{q \otimes r}w[w'/x]}
      {v =_q w \qquad v' =_r w'}
    \end{flalign*}
    \vspace{-26pt}
  \caption{$\mathcal{V}$-congruence rules.}
  \label{fig:theo_rules}
  \vspace{-4pt}
\end{figure*}
The elements of $Ax$ are the axioms of the theory. Let $Th(Ax)$ be the
smallest class that contains $Ax$ and that is closed under the rules
of Fig.~\ref{fig:eqs} and of Fig.~\ref{fig:theo_rules} (as usual
we omit the context and typing information). The elements of $Th(Ax)$
are  the theorems of the theory.

Let us examine the rules in Fig.~\ref{fig:theo_rules} in more
detail. They can be seen as a generalisation of the notion of a
congruence. The rules \textbf{(refl)} and \textbf{(trans)} are a
generalisation of equality's reflexivity and transitivity. Rule
\textbf{(weak)} encodes the principle that the higher the label in the
$\mathcal{V}$-equation, the `tighter' is the relation between the two
terms in the $\mathcal{V}$-equation. In other words, $v =_r w$ is
subsumed by $v =_q w$, for $r\leq q$. This can be seen clearly \eg
with the metric quantale by reading $v =_q w$ as ``the terms $v$ and
$w$ are \emph{at most} at distance $q$ from each other'' (recall that
in the metric quantale the usual order is reversed, \ie
$\leq\ :=\ \geq_{[0,\infty]}$). \textbf{(arch)} is essentially a
generalisation of the Archimedean rule
in~\cite{mardare16,mardare17}. It says that if $v =_r w$ for all
\emph{approximations} $r$ of $q$ then it is also the case that
$v =_q w$. \textbf{(join)} says that deductions are closed under finite
joins, and in particular it is always the case that $v =_\bot
w$. All other rules correspond to a generalisation of
\emph{compatibility} to a $\mathcal{V}$-equational setting.

The reader may have noticed that the rules in Fig.~\ref{fig:theo_rules}
do not contain a $\mathcal{V}$-generalisation of symmetry w.r.t.
standard equality. Such a generalisation would be:
\begin{flalign*}
  \infer{w =_q v}{v =_q w}
\end{flalign*}
This rule is not present in Fig.~\ref{fig:theo_rules} because in some
quantales $\mathcal{V}$ it forces too many
$\mathcal{V}$-equations. For example, in the Boolean quantale the condition
$v \leq w$ would automatically entail $w \leq v$ (due to
symmetry); in fact, for this particular case symmetry forces the notion
of inequation to collapse into the classical notion of equation. On
the other hand, symmetry is desirable in the (ultra)metric case
because (ultra)metrics need to respect the symmetry
equation~\cite{JGL-topology}.
\begin{definition}[Symmetric linear $\mathcal{V}\lambda$-theories]
  A symmetric linear $\mathcal{V}\lambda$-theory is a linear
  $\mathcal{V}\lambda$-theory whose set of theorems is closed under
  symmetry.
\end{definition}
In Appendix \ref{sec:simpl} we further explore how specific families
of quantales are reflected in the $\mathcal{V}$-equational system here
introduced, and briefly compare the latter to metric 
algebra~\cite{mardare16,mardare17}.

\subsection{Semantics of $\mathcal{V}$-equations}
\label{sec:vcat}
In this subsection we set the necessary background for presenting a
sound and complete class of models for (symmetric) linear
$\mathcal{V}\lambda$-theories. We start by recalling basics concepts
of $\mathcal{V}$-categories, which are central in a field initiated by
Lawvere in~\cite{lawvere73} and can be intuitively seen as generalised
metric spaces~\cite{stubbe14,hofmann20,velebil19}. As we will see,
$\mathcal{V}$-categories provide structure to suitably interpret
$\mathcal{V}$-equations.

\begin{definition}
  A (small) $\mathcal{V}$-category is a pair $(X,a)$ where $X$ is a
  class (set) and $a : X \times X \to \mathcal{V}$ is a function that
  satisfies:
  \begin{flalign*}
    k \leq a(x,x) \qquad \text{ and }  \qquad
    a(x,y) \otimes a(y,z) \leq a(x,z) \hspace{2cm}
    (x,y,z \in X)
  \end{flalign*}
  For two $\mathcal{V}$-categories $(X,a)$ and $(Y,b)$, a
  $\mathcal{V}$-functor $f : (X,a) \to (Y,b)$ is a function
  $f : X \to Y$ that satisfies the inequality
  $a(x,y) \leq b(f(x),f(y))$ for all $x,y \in X$.
\end{definition}
Small $\mathcal{V}$-categories and $\mathcal{V}$-functors form a
category which we denote by $\VCat$.  A $\mathcal{V}$-category $(X,a)$
is called \emph{symmetric} if $a(x,y) = a(y,x)$ for all $x,y \in
X$. We denote by $\VCatSy$ the full subcategory of $\VCat$ whose
objects are symmetric. Every $\mathcal{V}$-category carries a natural
order defined by $x \leq y$ whenever $k \leq a(x,y)$. A
$\mathcal{V}$-category is called \emph{separated} if its natural order
is anti-symmetric. We denote by $\VCatSe$ the full subcategory of
$\VCat$ whose objects are separated.

\begin{example}
  For $\mathcal{V}$ the Boolean quantale, $\VCatSe$ is the category
  $\Pos$ of partially ordered sets and monotone maps; 
  $\VCatSS$ is simply the category $\Set$ of sets and functions.  For
  $\mathcal{V}$ the metric quantale, $\VCatSS$ is the category $\Met$
  of extended metric spaces and non-expansive maps. In what follows we
  omit the qualifier `extended' in `extended (ultra)metric
  spaces'. For $\mathcal{V}$ the ultrametric quantale, $\VCatSS$ is
  the category of ultrametric spaces and non-expansive maps.
\end{example}
The inclusion functor $\VCatSe \hookrightarrow \VCat$ has a left
adjoint \cite{hofmann20}. It is constructed first by defining the
equivalence relation $x \sim y$ whenever $x \leq y$ and $y \leq x$
(for $\leq$ the natural order introduced earlier). Then this relation
induces the separated $\mathcal{V}$-category $(X/_\sim, \tilde a)$
where $\tilde a$ is defined as $\tilde a([x],[y]) = a(x,y)$ for every
$[x],[y] \in X/_\sim$. The left adjoint of the inclusion functor
$\VCatSe \hookrightarrow \VCat$ sends every $\mathcal{V}$-category
$(X,a)$ to $(X/_\sim, \tilde a)$. This quotienting construct preserves
symmetry, and therefore we automatically obtain the following result.
\begin{theorem}\label{theo:reflective}
  The inclusion functor $\VCatSS \hookrightarrow \VCatSy$ has a left
  adjoint.
\end{theorem}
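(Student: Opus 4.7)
The plan is to piggyback on the already-described adjunction $\VCatSe \hookrightarrow \VCat$ and verify that the quotienting construction restricts to the symmetric subcategories. Concretely, given a symmetric $\mathcal{V}$-category $(X,a) \in \VCatSy$, I would form the same quotient $(X/_\sim, \tilde{a})$ where $x \sim y$ iff $k \leq a(x,y)$ and $k \leq a(y,x)$, and $\tilde{a}([x],[y]) = a(x,y)$. First I need to check this is well-defined, which uses the composition law of $\mathcal{V}$-categories together with integrality of $\mathcal{V}$ (so that $k$ is the top): if $x \sim x'$ and $y \sim y'$, then $a(x',y') \geq a(x',x) \otimes a(x,y) \otimes a(y,y') = k \otimes a(x,y) \otimes k = a(x,y)$, and symmetrically $a(x,y) \geq a(x',y')$.

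Next I would observe that symmetry transfers trivially: $\tilde{a}([x],[y]) = a(x,y) = a(y,x) = \tilde{a}([y],[x])$, so $(X/_\sim, \tilde{a}) \in \VCatSS$. Then I would verify that the quotient map $q : (X,a) \to (X/_\sim,\tilde{a})$ is a $\mathcal{V}$-functor (in fact an isometry on representatives) and serves as the unit of the adjunction. For the universal property, given any $\mathcal{V}$-functor $f : (X,a) \to (Y,b)$ with $(Y,b) \in \VCatSS$, whenever $x \sim y$ we have $k \leq a(x,y) \leq b(f(x),f(y))$ and symmetrically $k \leq b(f(y),f(x))$, so $f(x)$ and $f(y)$ are equivalent in the natural preorder of $Y$; since $Y$ is separated, $f(x) = f(y)$. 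Hence $f$ factors uniquely as $\tilde{f} \circ q$, and $\tilde{f}$ is automatically a $\mathcal{V}$-functor because $\tilde{a}([x],[y]) = a(x,y) \leq b(f(x),f(y)) = b(\tilde{f}[x],\tilde{f}[y])$.

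Finally, I would package these verifications into the statement of the adjunction $\VCatSS(L(X,a),(Y,b)) \cong \VCatSy((X,a),(Y,b))$ natural in both arguments, where $L$ sends $(X,a)$ to $(X/_\sim,\tilde{a})$ and acts on morphisms by $L(f)[x] = [f(x)]$ (well-defined since $\mathcal{V}$-functors preserve the $\sim$-relation). I do not anticipate any serious obstacle: the entire content of the theorem is the observation that symmetry is preserved by the quotient, which is immediate from the definition of $\tilde{a}$, while the rest of the argument is exactly the non-symmetric case of \cite{hofmann20} cited just above the statement.
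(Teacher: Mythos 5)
Your proposal is correct and follows essentially the same route as the paper: the paper likewise takes the separation quotient $(X,a)\mapsto(X/_\sim,\tilde a)$ (the left adjoint of $\VCatSe\hookrightarrow\VCat$ from~\cite{hofmann20}) and observes that it preserves symmetry, hence restricts to a left adjoint of $\VCatSS\hookrightarrow\VCatSy$. One cosmetic remark: well-definedness of $\tilde a$ needs only monotonicity of $\otimes$ and $k\leq a(x',x)$, $k\leq a(y,y')$ --- not integrality --- and your chain should read $a(x',x)\otimes a(x,y)\otimes a(y,y')\geq k\otimes a(x,y)\otimes k=a(x,y)$ with an inequality in the middle step rather than an equality.
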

Next, we recall notions of enriched category theory~\cite{kelly82}
instantiated into the setting of \emph{autonomous categories enriched
  over $\mathcal{V}$-categories}.  We will use the enriched structure
to give semantics to $\mathcal{V}$-equations between linear
$\lambda$-terms.
First, note that every category $\VCat$ is autonomous with the tensor
$(X,a) \otimes (Y,b) := (X \times Y, a \otimes b)$ where $a\otimes b$ is defined as
  $(a \otimes b)((x,y), (x',y')) = a(x,x') \otimes b(y,y')$
and the set of $\mathcal{V}$-functors $\VCat((X,a),(Y,b))$ equipped
with the map
$(f,g) \mapsto \bigwedge_{x \in X} b(f(x),g(x))$.

\begin{theorem}\label{theo:auto}
  The categories $\VCatSy$,
  $\VCatSe$, and $\VCatSS$ inherit the autonomous structure of
  $\VCat$ whenever $\mathcal{V}$ is integral.
\end{theorem}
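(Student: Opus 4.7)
The plan is to show that each of $\VCatSy$, $\VCatSe$, $\VCatSS$ is a \emph{full} subcategory of $\VCat$ closed under the tensor $\otimes$ and the internal hom of $\VCat$, and that it contains the tensor unit $(\{\ast\}, u)$ with $u(\ast,\ast) = k$. Since they are full, the associator, left unitor, symmetry, and evaluation morphisms available in $\VCat$ between objects of the subcategory are again morphisms there, and the coherence diagrams hold automatically. So the autonomous structure restricts.

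First I would handle the unit: $(\{\ast\},u)$ is trivially symmetric and trivially separated. Next, for the tensor, symmetry is immediate from the formula, since $(a \otimes b)((x',y'),(x,y)) = a(x',x) \otimes b(y',y)$ equals $(a \otimes b)((x,y),(x',y'))$ whenever $a$ and $b$ are symmetric. For separation, suppose $(x,y)$ and $(x',y')$ are identified by the natural order of the tensor, so that $k \leq a(x,x') \otimes b(y,y')$ and $k \leq a(x',x) \otimes b(y',y)$. Here the integrality hypothesis enters: since $k$ is the top element of $\mathcal{V}$, we have $b(y,y') \leq k$ and hence by monotonicity of $\otimes$ that $a(x,x') \otimes b(y,y') \leq a(x,x') \otimes k = a(x,x')$; analogously $\leq b(y,y')$. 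Thus $k \leq a(x,x')$ and $k \leq b(y,y')$, and symmetrically for the reversed pairs, so separation of $(X,a)$ and $(Y,b)$ forces $x = x'$ and $y = y'$.

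For the internal hom $\VCat((X,a),(Y,b))$ equipped with $(f,g) \mapsto \bigwedge_{x \in X} b(f(x),g(x))$, symmetry of $(Y,b)$ immediately transfers pointwise under the infimum, so the hom is symmetric whenever $(Y,b)$ is. For separation, if $f$ and $g$ are identified by the natural order of the hom, then $k \leq b(f(x),g(x))$ and $k \leq b(g(x),f(x))$ for every $x$; separation of $(Y,b)$ then gives $f(x) = g(x)$ pointwise, hence $f = g$. Note no integrality is needed in this step.

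The main obstacle is the separation clause for the tensor: without integrality, $k \leq p \otimes q$ does not in general entail $k \leq p$ and $k \leq q$, and the argument collapses. The integrality assumption on $\mathcal{V}$ is therefore precisely what makes the tensor of two separated $\mathcal{V}$-categories separated. Once these closure properties are in place, the inherited autonomous structure on each of $\VCatSy$, $\VCatSe$, $\VCatSS$ is the one obtained by restricting the tensor, unit, internal hom, and the canonical structural isomorphisms of $\VCat$, and the theorem follows.
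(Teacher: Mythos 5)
Your proposal is correct and follows essentially the same route as the paper: show that the tensor, unit, and internal hom of $\VCat$ preserve symmetry and separation, with integrality entering exactly where you place it (via $p \otimes q \leq p$ to get separation of the tensor) and the hom's separation following from $k \leq \bigwedge_x b(f(x),g(x))$ implying the pointwise bounds. The paper's proof is just a terser version of the same argument.
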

Since we assume that $\mathcal{V}$ is integral, this last theorem
allows us to formally define the notion of categories enriched over
$\mathcal{V}$-categories using~\cite{kelly82}.

\begin{definition}\label{def:VCatEnriched}
  A category $\catC$ is $\VCat$-enriched (or simply, a
  $\VCat$-category) if for all $\catC$-objects $X$ and $Y$ the hom-set
  $\catC(X,Y)$ is a $\mathcal{V}$-category and if the composition 
  of $\catC$-morphisms, 
  \begin{flalign*}
    (\ \cdot\ ) : \catC(X,Y) \otimes \catC(Y,Z)
    \longrightarrow \catC(X,Z)
  \end{flalign*}
  is a $\mathcal{V}$-functor. Given two $\VCat$-categories $\catC$ and
  $\catD$ and a functor $\funF : \catC \to \catD$, we call $\funF$ a
  $\VCat$-functor if for all $\catC$-objects $X$ and $Y$ the map
  $\funF_{X,Y} : \catC(X,Y) \to \catD(\funF X, \funF, Y)$ is a
  $\mathcal{V}$-functor.  An adjunction
  $\catC : \funF \dashv \funG : \catD$ is called $\VCat$-enriched if
  the underlying functors $\funF$ and $\funG$ are $\VCat$-functors and
  if for all objects $X \in |\catC|$ and $Y \in |\catD|$ there exists
  a $\mathcal{V}$-isomorphism 
  $\catD(\funF X, Y) \simeq \catC(X, \funG Y)$
  natural in $X$ and $Y$.
\end{definition}
If $\catC$ is a $\VCat$-category then $\catC \times \catC$ is also a
$\VCat$-category via the tensor operation $\otimes$ in $\VCat$.  We
take advantage of this fact in the following definition.
\begin{definition}\label{defn:enr_aut}
  A $\VCat$-enriched autonomous category $\catC$ is an
  autonomous and $\VCat$-category $\catC$ such that the
  bifunctor $\otimes : \catC \times \catC \to \catC$ is a
  $\VCat$-functor and the adjunction
  $(- \otimes X) \dashv (X \multimap -)$ is a $\VCat$-adjunction.
\end{definition}

\begin{example}\label{ex:pos_met}
  Recall that $\Pos \simeq \VCatSe$ when $\mathcal{V}$ is the Boolean
  quantale. According to Theorem~\ref{theo:auto} the category $\Pos$
  is autonomous. It follows by general results that the category is
  $\Pos$-enriched~\cite{Bor94a}. It is also easy to see that its
  tensor is $\Pos$-enriched and that the adjunction
  $(- \otimes X) \dashv (X \multimap -)$ is
  $\Pos$-enriched. Therefore, $\Pos$ is an instance of
  Definition~\ref{defn:enr_aut}. Note also that $\Set \simeq \VCatSS$
  for $\mathcal{V}$ the Boolean quantale and that $\Set$ is an
  instance of Definition~\ref{defn:enr_aut}.

  Recall that $\Met \simeq \VCatSS$ when $\mathcal{V}$ is the metric
  quantale.  Thus, the category $\Met$ is autonomous
  (Theorem~\ref{theo:auto}) and $\Met$-enriched~\cite{Bor94a}.  It
  follows as well from routine calculations that its tensor is
  $\Met$-enriched and that the adjunction
  $(- \otimes X) \dashv (X \multimap -)$ is $\Met$-enriched. Therefore
  $\Met$ is an instance of Definition~\ref{defn:enr_aut}. An analogous
  reasoning tells that the category of ultrametric spaces (enriched
  over itself) is also an instance of Definition~\ref{defn:enr_aut}.
\end{example}
Finally, recall the interpretation of linear $\lambda$-terms on an
autonomous category $\catC$ (Section~\ref{sec:internal}) and assume
that $\catC$ is $\VCat$-enriched. Then we say that a
$\mathcal{V}$-equation $\Gamma \vljud v =_q w : \typeA$ is
\emph{satisfied} by this interpretation if
$a(\sem{\Gamma \vljud v : \typeA},\sem{\Gamma \vljud w : \typeA}) \geq
q$ where
$a : \catC(\sem{\Gamma},\sem{\typeA}) \times
\catC(\sem{\Gamma},\sem{\typeA}) \to \mathcal{V}$ is the underlying
function of the $\mathcal{V}$-category
$\catC(\sem{\Gamma},\sem{\typeA})$.

\begin{theorem}\label{theo:sound}
  The rules listed in Fig.~\ref{fig:eqs} and Fig.~\ref{fig:theo_rules}
  are sound for $\VCat$-enriched autonomous categories $\catC$.
  Specifically, if $\Gamma \vljud v =_q w : \typeA$ results from the
  rules in Fig.~\ref{fig:eqs} and Fig.~\ref{fig:theo_rules} then
  $a(\sem{\Gamma \vljud v : \typeA},\sem{\Gamma \vljud w : \typeA})
  \geq q$.
\end{theorem}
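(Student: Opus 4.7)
The plan is to proceed by induction on the derivation of $\Gamma \vljud v =_q w : \typeA$, treating rules from Fig.~\ref{fig:eqs} and Fig.~\ref{fig:theo_rules} in turn. For the equational axioms of Fig.~\ref{fig:eqs}, Theorem~\ref{theo:bsound} already gives $\sem{\Gamma \vljud v : \typeA} = \sem{\Gamma \vljud w : \typeA}$ as morphisms of $\catC$; combined with $k \leq a(m,m)$ and the integrality assumption $k = \top$, this immediately yields $a(\sem{v},\sem{w}) \geq \top \geq q$ for any $q$, so these cases pose no difficulty.

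For the purely deductive rules of Fig.~\ref{fig:theo_rules}, I would read each directly off the $\mathcal{V}$-category axioms and the lattice-theoretic assumptions on $\mathcal{V}$. Namely, \textbf{(refl)} uses $k \leq a(m,m)$ with $k = \top$; \textbf{(trans)} uses the triangle-type inequality $a(m,n) \otimes a(n,p) \leq a(m,p)$; \textbf{(weak)} is immediate from $r \leq q \leq a(\sem{v},\sem{w})$; \textbf{(join)} follows because $a(\sem{v},\sem{w})$ is an upper bound of $\{q_i\}$ and hence dominates $\bigvee q_i$; and \textbf{(arch)} follows from continuity of the underlying lattice, since $q = \bigvee \{r \mid r \ll q\}$ and by hypothesis $a(\sem{v},\sem{w})$ dominates every such $r$. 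The exchange/perm rule transports along the isomorphism $\sh_E$ in Fig.~\ref{fig:lang_sem}, which acts by a fixed $\catC$-morphism and so the value of $a$ between the two interpretations is unchanged.

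The substantive cases are the compatibility rules. Here I would unfold the semantic clauses in Fig.~\ref{fig:lang_sem} and use the enriched structure from Definition~\ref{defn:enr_aut}. Composition is a $\mathcal{V}$-functor, hence $a(f,f') \otimes a(g,g') \leq a(g \comp f, g' \comp f')$; the tensor bifunctor is a $\VCat$-functor, hence $a(m,m') \otimes a(n,n') \leq a(m \otimes n, m' \otimes n')$; and the adjunction $(- \otimes X) \dashv (X \multimap -)$ being $\VCat$-enriched yields that currying $\overline{(\blank)}$ is itself a $\mathcal{V}$-functor on hom objects, which handles the $\multimap_i$ congruence. The remaining rules (congruence for $f \in \Sigma$, for $\otimes$, for $\prog{pm}\ \blank\ \prog{to}\ x \otimes y.\ \blank$, for $\blank\ \prog{to}\ \ast.\ \blank$, and for application) all have interpretations built by composing and tensoring a fixed ``housekeeping'' morphism with the inductively varying pieces $m,m',n,n'$; a routine calculation combining the above $\mathcal{V}$-functoriality facts with the monotonicity of $\otimes$ in each argument yields the required $\otimes q_i$ (respectively $q \otimes r$) lower bound.

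The only non-routine step is the substitution congruence at the bottom of Fig.~\ref{fig:theo_rules}. Here I would first prove a semantic substitution lemma stating that $\sem{\Gamma, \Delta \vljud v[v'/x] : \typeB}$ factors as the composite of $\sem{\Gamma, x : \typeA \vljud v : \typeB}$ with a morphism built from $\sem{\Delta \vljud v' : \typeA}$ via $\spl$, $\join$, and $\id \otimes (\blank)$; this is the semantic counterpart of the syntactic substitution in Lemma~\ref{lem:exch_subst} and is a straightforward induction on the derivation of $v$. Given this factorisation, the inequality $a(\sem{v[v'/x]},\sem{w[w'/x]}) \geq q \otimes r$ reduces once more to $\mathcal{V}$-functoriality of composition and of $\otimes$. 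I expect this substitution lemma, together with the need to keep track of the shuffles $\sh_E$ throughout the compatibility cases, to be the main bookkeeping obstacle; the $\mathcal{V}$-enriched inequalities themselves are then direct.
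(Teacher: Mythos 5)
Your proposal is correct and follows essentially the same route as the paper's proof: Theorem~\ref{theo:bsound} plus integrality for the Fig.~\ref{fig:eqs} axioms, the $\mathcal{V}$-category axioms, continuity and lattice facts for the deductive rules, and $\mathcal{V}$-functoriality of composition, tensor and the hom-adjunction for the compatibility rules. The semantic substitution lemma you flag as the main bookkeeping step is exactly the auxiliary result the paper proves (Lemma~\ref{lem:exch_subst_inter}) and then uses in the displayed calculation for the substitution congruence.
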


\begin{proof}[Proof of Theorem~\ref{theo:sound}]
  Let us focus first on the equations listed in Fig.~\ref{fig:eqs}.
  Recall that an equation $\Gamma \vljud v = w : \typeA$ abbreviates
  the $\mathcal{V}$-equations $\Gamma \vljud v =_\top w : \typeA$ and
  $\Gamma \vljud w =_\top v : \typeA$.  Moreover, we already know that
  the equations listed in Fig.~\ref{fig:eqs} are sound for
  autonomous categories, specifically if $v = w$ is an equation of
  Fig.~\ref{fig:eqs} then $\sem{v} = \sem{w}$ in $\catC$
  (Theorem~\ref{theo:bsound}).  Thus, by the definition of a
  $\mathcal{V}$-category and by the assumption of $\mathcal{V}$
  being integral ($k = \top$) we obtain
  $a(\sem{v},\sem{w}) \geq k = \top$ and
  $a(\sem{w},\sem{v}) \geq k = \top$.

  Let us now focus on the rules listed in
  Fig.~\ref{fig:theo_rules}. The first three rules follow from the
  definition of a $\mathcal{V}$-category and the transitivity property
  of $\leq$. Rule \textbf{(arch)} follows from the continuity of
  $\mathcal{V}$, specifically from the fact that $q$ is the
  \emph{least} upper bound of all elements $r$ that are way-below $q$.
  Rule \textbf{(join)} follows from the definition of least upper bound.
  The remaining rules follow from the definition of the tensor functor
  $\otimes$ in $\VCat$, the fact that $\catC$ is $\VCat$-enriched,
  $\otimes : \catC \times \catC \to \catC$ is a $\VCat$-functor, and
  the fact that $(- \otimes X) \dashv (X \multimap -)$ is a
  $\VCat$-adjunction. For example, for the sixth rule we reason as
  follows:
  \begin{align*}
    & \, a(\sem{f(v_1,\dots,v_n)}, \sem{f(w_1,\dots,w_n)}) \\
    & = a(\sem{f} \comp (\sem{v_1} \otimes \dots \otimes \sem{v_n})
    \comp \spl_{\Gamma_1;\dots;\Gamma_n} \comp \sh_E,
    \sem{f} \comp (\sem{w_1} \otimes \dots \otimes \sem{w_n}) \comp
    \spl_{\Gamma_1;\dots;\Gamma_n} \comp \sh_E) & \\
    & \geq a(\sem{f} \comp (\sem{v_1} \otimes \dots \otimes \sem{v_n}),
    \sem{f} \comp (\sem{w_1} \otimes \dots \otimes \sem{w_n})) & \\
    & \geq a(\sem{v_1} \otimes \dots \otimes \sem{v_n}),
      (\sem{w_1} \otimes \dots \otimes \sem{w_n}) & \\
    & \geq a(\sem{v_1},\sem{w_1}) \otimes \dots \otimes
      a(\sem{v_n},\sem{w_n}) &  \\
    & \geq q_1 \otimes \dots \otimes q_n &  
  \end{align*}
  where the second step follows from the fact that
  $\spl_{\Gamma_1;\dots;\Gamma_n} \comp \sh_E$ is a morphism in
  $\catC$ and that $\catC$ is $\VCat$-enriched. The third step follows
  from an analogous reasoning. The fourth step follows from the fact that
  $\otimes : \catC \times \catC \to \catC$ is a $\VCat$-functor. The
  last step follows from the premise of the rule in question.  As
  another example, the proof for the substitution rule  proceeds similarly:
  \begin{align*}
    & \, a(\sem{v[v'/x]}, \sem{w[w'/x]}) \\
    & = a(\sem{v} \comp \join_{\Gamma,\typeA} \comp
      (\id \otimes \sem{v'}) \comp \spl_{\Gamma;\Delta},
      \sem{w} \comp \join_{\Gamma,\typeA} \comp
      (\id \otimes \sem{w'}) \comp \spl_{\Gamma;\Delta}) & \\
    & \geq a(\sem{v} \comp \join_{\Gamma,\typeA} \comp
      (\id \otimes \sem{v'}),
      \sem{w} \comp \join_{\Gamma,\typeA} \comp
      (\id \otimes \sem{w'})) & \\
    & \geq
    a(\id \otimes \sem{v'}, \id \otimes \sem{w'}) \otimes
    a(\sem{v} \comp \join_{\Gamma,\typeA},
    \sem{w} \comp \join_{\Gamma,\typeA}) \\
    & \geq
    a(\id \otimes \sem{v'}, \id \otimes \sem{w'}) \otimes
      a(\sem{v}, \sem{w}) \\
    & \geq
    a(\id,\id) \otimes a(\sem{v'}, \sem{w'})  \otimes
    a(\sem{v}, \sem{w}) \\
    & =
    a(\sem{v'}, \sem{w'}) \otimes a(\sem{v}, \sem{w}) \\
    & \geq q \otimes r
  \end{align*}
  The proof for the rule concerning $(\multimap_i)$ additionally
  requires the following two facts: if a $\mathcal{V}$-functor
  $f : (X,a) \to (Y,b)$ is an isomorphism then
  $a(x,x') = b(f(x),f(x'))$ for all $x,x' \in X$. For a context
  $\Gamma$, the morphism
  $\join_{\Gamma; x : \typeA} : \sem{\Gamma} \otimes \sem{\typeA} \to
  \sem{\Gamma, x : \typeA}$ is an isomorphism in $\catC$. The proof
  for the rule concerning the permutation of variables (exchange) also
  makes use of the fact that $\sem{\Delta} \to \sem{\Gamma}$ is
  an isomorphism.
\end{proof}

\subsection{Soundness, completeness, and internal language}

In this subsection we establish a formal correspondence between linear
$\mathcal{V}\lambda$-theories and $\VCat$-enriched autonomous
categories, via soundness, completeness, and internal language
theorems. A key construct in this correspondence is the quotienting of
a $\mathcal{V}$-category into a \emph{separated}
$\mathcal{V}$-category: we will use it to identify linear
$\lambda$-terms when generating a syntactic category (from a linear
$\mathcal{V}\lambda$-theory) that satisfies the axioms of
autonomous categories. This naturally leads to the following notion of
a model for linear $\mathcal{V}\lambda$-theories.
\begin{definition}[Models of linear $\mathcal{V}\lambda$-theories]
  Consider a linear $\mathcal{V}\lambda$-theory $((G,\Sigma),Ax)$ and
  a $\VCatSe$-enriched autonomous category $\catC$. Suppose that for
  each $X \in G$ we have an interpretation $\sem{X}$ as a
  $\catC$-object and analogously for the operation symbols. This
  interpretation structure is a model of the theory if all axioms in
  $Ax$ are satisfied by the interpretation.
\end{definition}
Another thing that we need to take into account is the size of
categories.  In Section~\ref{sec:internal} we did not assume that
autonomous categories should be locally small. In particular linear
$\lambda$-theories are able to generate non-(locally small)
categories. Now we need to be stricter because $\VCatSe$-enriched
autonomous categories are always locally small (recall the definition
of $\VCatSe$).  Thus for two types $\typeA$ and $\typeB$ of a
$\mathcal{V}\lambda$-theory $T$, consider the class
$\closValBP{\typeA}{\typeB}$ of values $v$ such that
$x : \typeA \vljud v : \typeB$. We equip $\closValBP{\typeA}{\typeB}$
with the function
$a : \closValBP{\typeA}{\typeB} \times \closValBP{\typeA}{\typeB} \to
\mathcal{V}$ defined by,
\begin{flalign*}
  a(v,w) = \bigvee \{ q \mid v =_q w \text{ is a theorem of } T\}
\end{flalign*}
It is easy to see that $(\closValBP{\typeA}{\typeB},a)$ is a (possibly
large) $\mathcal{V}$-category. We then quotient this
$\mathcal{V}$-category into a \emph{separated} $\mathcal{V}$-category
which we suggestively denote by $\catC(\typeA,\typeB)$ (as detailed in
the proof of the next theorem, $\catC(\typeA,\typeB)$ will serve as a
hom-object of a syntactic category $\catC$ generated from a linear
$\mathcal{V}\lambda$-theory). Following the nomenclature
of~\cite{linton66}, we call $T$ \emph{varietal} if
$\catC(\typeA,\typeB)$ is a \emph{small} $\mathcal{V}$-category. In
the rest of the paper we will only work with varietal theories and
locally small categories.

\begin{theorem}[Soundness \& Completeness]\label{theo:sound_compl2}
  Consider a varietal $\mathcal{V}\lambda$-theory. A
  $\mathcal{V}$-equation-in-context $\Gamma \vljud v =_q w : \typeA$
  is a theorem iff it holds in all models of the theory.
\end{theorem}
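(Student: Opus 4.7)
The plan is to mirror Theorem~\ref{theo:sound_compl} in the $\VCat$-enriched setting. For soundness I would induct on the derivation of $\Gamma \vljud v =_q w : \typeA$ in $Th(Ax)$: the axioms in $Ax$ are handled by the definition of a model, while the rules of Fig.~\ref{fig:eqs} and Fig.~\ref{fig:theo_rules} are handled by Theorem~\ref{theo:sound}.

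For completeness I would build a syntactic $\VCatSe$-enriched autonomous category $\Syn(T)$ whose objects are the types of $T$ and whose hom-object $\Syn(T)(\typeA,\typeB)$ is obtained by applying the reflector of Theorem~\ref{theo:reflective} to $(\closValBP{\typeA}{\typeB}, a)$, where
\[
a(v,w) = \bigvee\{ q \in B \mid v =_q w \text{ is a theorem of } T\}.
\]
Varietality ensures this hom-object is small, and composition is defined via substitution, exactly as in the proof sketch of Theorem~\ref{theo:sound_compl}. The congruence rules of Fig.~\ref{fig:theo_rules} for substitution, tensor, currying, and application are precisely what guarantees that composition and $\otimes$ are $\VCat$-functors and that $(- \otimes X) \dashv (X \multimap -)$ is a $\VCat$-adjunction, so $\Syn(T)$ fits Definition~\ref{defn:enr_aut}.

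The canonical interpretation sending each ground type and each operation symbol to itself is then a model of $T$: every axiom $v =_q w$ is by definition a theorem, so $a(v,w) \geq q$ directly from the construction of the hom. Conversely, if $\Gamma \vljud v =_q w : \typeA$ holds in every model then it holds in $\Syn(T)$, which yields
\[
\bigvee\{ r \in B \mid v =_r w \text{ is a theorem of } T\} \geq q.
\]
To turn this sup into a proof I would exploit the standing assumptions on $\mathcal{V}$: for each $r \in B$ with $r \ll q$, continuity of $\mathcal{V}$ produces a finite subset $A \subseteq \{r' \in B \mid v =_{r'} w \text{ is a theorem}\}$ with $r \leq \bigvee A$; rule \textbf{(join)} then delivers $v =_{\bigvee A} w$ (using that $B$ is closed under finite joins), \textbf{(weak)} gives $v =_r w$, and finally \textbf{(arch)} concludes that $v =_q w$ is a theorem of $T$.

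The hard part is verifying that $\Syn(T)$ really does fit Definition~\ref{defn:enr_aut}. The enrichment coherences -- composition as a $\mathcal{V}$-functor, $\otimes$ as a $\VCat$-functor, and currying as a $\mathcal{V}$-natural adjunction -- must be checked \emph{after} the separation quotient, and this amounts to matching each congruence rule of Fig.~\ref{fig:theo_rules} against one enrichment condition while also checking that (arch) and (join) correctly bridge the sup-based hom and syntactic provability. Integrality of $\mathcal{V}$ is what makes this go through, since $k = \top$ ensures that the separation quotient respects the ordinary equations of Fig.~\ref{fig:eqs}.
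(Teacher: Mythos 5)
Your proposal follows the paper's own proof essentially step for step: soundness by induction using Theorem~\ref{theo:sound}, completeness via the separation quotient of the hom-$\mathcal{V}$-categories $(\closValBP{\typeA}{\typeB},a)$ with $a(v,w)=\bigvee\{q\mid v=_q w\}$, verification of the enrichment conditions from the congruence rules, and the final \textbf{(join)}/\textbf{(weak)}/\textbf{(arch)} argument to convert $a([v],[w])\geq q$ back into provability. The only cosmetic difference is that you credit the finite-subset step to continuity where the paper credits it directly to the definition of the way-below relation, but the argument is the same.
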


\begin{proof}[Proof sketch]
  Soundness follows by induction over the rules that define the class
  $Th(Ax)$ (Definition~\ref{defn:vtheory}) and by
  Theorem~\ref{theo:sound}.  For completeness, we use a strategy
  similar to the proof of
  Theorem~\ref{theo:sound_compl}, and take advantage of the
  quotienting of a $\mathcal{V}$-category into a separated
  $\mathcal{V}$-category. Recall that we assume that the theory is
  \emph{varietal} and therefore can safely take $\catC(\typeA,\typeB)$
  to be a small $\mathcal{V}$-category. Note that the quotienting
  process identifies all terms $x : \typeA \vljud v : \typeB$ and
  $x : \typeA \vljud w : \typeB$ such that $v =_\top w$ and
  $w =_\top v$. Such a relation contains the equations-in-context from
  Fig.~\ref{fig:eqs} and moreover it is straighforward to show that
  it is compatible with the term formation rules of linear
  $\lambda$-calculus (Fig.~\ref{fig:lang}). So, analogously to
  Theorem~\ref{theo:sound_compl} we obtain an autonomous category
  $\catC$ whose objects are the types of the language and whose
  hom-sets are the underlying sets of the $\mathcal{V}$-categories
  $\catC(\typeA,\typeB)$.

  Our next step is to show that the category $\catC$ has a
  $\VCatSe$-enriched autonomous structure.  We start by showing that the
  composition map
  $\catC(\typeA,\typeB) \otimes \catC(\typeB,\typeC) \to
  \catC(\typeA,\typeC)$ is a $\mathcal{V}$-functor:
  \begin{align*}
    a(([v'],[v]), ([w'],[w]))
    & = a([v],[w]) \otimes a([v'],[w'])  & \\
    & = a(v,w) \otimes a(v',w') & \\
    & = \bigvee \{ q \mid v =_q w \} \otimes
      \bigvee \{ r \mid v' =_r w' \}
    &  \\
    & = \bigvee \{ q \otimes r \mid v =_q w, v' =_r w' \}
    &  \\
    & \leq \bigvee \{ q \mid v[v'/x]
      =_q w[w'/x] \}  &
    (A \subseteq B \Rightarrow \bigvee A \leq \bigvee B) \\
    & = a( v[v'/x] , w[w'/x]) & \\
    & = a( [v[v'/x]] , [w[w'/x]]) \\
    & = a([v] \comp [v'] ,[w] \comp [w'])
  \end{align*}
  The fact that $\otimes : \catC \times \catC \to \catC$ is a
  $\VCat$-functor follows by an analogous reasoning. Next, we need to
  show that $(- \otimes X) \dashv (X \multimap -)$ is a
  $\VCat$-adjunction. It is straightforward to show that both
  functors are $\VCat$-functors, and from a similar reasoning it follows
  that the isomorphism
  $\catC(\typeB, \typeA \multimap \typeC) \simeq \catC(\typeB \otimes
  \typeA, \typeC)$ is a $\mathcal{V}$-isomorphism.

  The final step is to show that if an equation
  $\Gamma \vljud v =_q w : \typeA$ with $q \in B$ is satisfied by
  $\catC$ then it is a theorem of the linear
  $\mathcal{V}\lambda$-theory. By assumption
  $a([v],[w]) = a(v,w) = \bigvee \{ r \mid v =_r w \} \geq q$. It follows
  from the definition of the way-below relation that for all $x \in B$
  with $x \ll q$ there exists a \emph{finite} set
  $A \subseteq \{ r \mid v =_r w \}$ such that $x \leq \bigvee A$. Then
  by an application of rule \textbf{(join)} in Fig.~\ref{fig:theo_rules}
  we obtain $v =_{\bigvee A} w$, and consequently rule \textbf{(weak)} in
  Fig.~\ref{fig:theo_rules} provides $v =_x w$ for all $x \ll
  q$. Finally, by an application of rule \textbf{(arch)} in
  Fig.~\ref{fig:theo_rules}  we deduce that
  $v =_q w$ is part of the theory.
\end{proof}

Next we establish results that will be key in the proof of the
internal language theorem.  Let $\Syn(T)$ be syntactic category of a
linear $\mathcal{V}\lambda$-theory $T$, as described in
Theorem~\ref{theo:sound_compl2}.
\begin{theorem}\label{theo:class}
  Consider a linear $\mathcal{V}\lambda$-theory $T$ and a model of $T$
  on a $\VCatSe$-enriched autonomous category $\catC$. The model
  induces a $\VCatSe$-functor $\Syn(T) \to \catC$ that (strictly)
  preserves the autonomous structure of $\Syn(T)$.
\end{theorem}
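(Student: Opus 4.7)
The plan is to upgrade Theorem~\ref{theo:classb} from the unenriched to the $\VCatSe$-enriched setting. That theorem already produces a functor $\funF : \Syn(T) \to \catC$ strictly preserving the autonomous structure, sending a morphism $[v]$ (with $x : \typeA \vljud v : \typeB$) to $\sem{x : \typeA \vljud v : \typeB}$. First I would re-verify that $\funF$ is well-defined on the equivalence classes of $\Syn(T)$ built in the proof of Theorem~\ref{theo:sound_compl2}: two values $v,w$ are identified exactly when both $v =_\top w$ and $w =_\top v$ are theorems of $T$. By Theorem~\ref{theo:sound} and integrality of $\mathcal{V}$ (so $k = \top$), this forces $a(\sem{v},\sem{w}) \geq \top$ and $a(\sem{w},\sem{v}) \geq \top$ in the hom $\mathcal{V}$-category $\catC(\sem{\typeA},\sem{\typeB})$. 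Since $\catC$ is $\VCatSe$-enriched this hom is separated, so $\sem{v} = \sem{w}$ in $\catC$, and the assignment descends to a well-defined functor of ordinary categories.

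The only genuinely new step is to show that $\funF$ is in fact a $\VCat$-functor, i.e.\ that for all types $\typeA,\typeB$ the action on homs
\[
\funF_{\typeA,\typeB} : \Syn(T)(\typeA,\typeB) \longrightarrow \catC(\sem{\typeA},\sem{\typeB})
\]
is a $\mathcal{V}$-functor. Writing $a_{\Syn(T)}$ and $a_{\catC}$ for the respective underlying hom-functions, recall from the proof of Theorem~\ref{theo:sound_compl2} that
\[
a_{\Syn(T)}([v],[w]) \;=\; \bigvee \{\, q \mid v =_q w \text{ is a theorem of } T\,\}.
\]
By Theorem~\ref{theo:sound}, every $q$ in the set on the right satisfies $a_{\catC}(\sem{v},\sem{w}) \geq q$, and taking the join over all such $q$ yields
\[
a_{\Syn(T)}([v],[w]) \;\leq\; a_{\catC}(\funF[v],\funF[w]),
\]
which is exactly the $\mathcal{V}$-functoriality condition. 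Because the homs of $\catC$ are already separated, $\funF$ is automatically a $\VCatSe$-functor.

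Strict preservation of the autonomous structure is then inherited directly from Theorem~\ref{theo:classb}: the autonomous operations on both $\Syn(T)$ and $\catC$ are defined at the level of underlying morphisms, and $\funF$ coincides on objects and morphisms with the unenriched functor produced there. The only nontrivial step is thus the supremum argument above, and its smoothness is precisely what motivated defining $a_{\Syn(T)}$ as a join over all valid labels in the first place; I do not anticipate any further obstacles.
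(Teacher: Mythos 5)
Your proposal is correct and follows essentially the same route as the paper's proof: well-definedness of $\funF$ on equivalence classes via $\top$-equations, integrality, and separation of the homs of $\catC$; $\mathcal{V}$-functoriality by observing that every label $q$ in the join defining $a_{\Syn(T)}([v],[w])$ is bounded by $a_{\catC}(\sem{v},\sem{w})$ (soundness of theorems in the model); and strict preservation of the autonomous structure inherited from Theorem~\ref{theo:classb}. The only cosmetic difference is that you cite Theorem~\ref{theo:sound} where the paper appeals to ``the definition of a model,'' but both amount to the soundness of the deductive system for the given interpretation.
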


\begin{proof}
  Consider a model of $T$ over $\catC$. Let $a$ denote the underlying
  function of the hom-($\mathcal{V}$-categories) in $\Syn(T)$ and $b$
  the underlying function of the hom-($\mathcal{V}$-categories) in
  $\catC$. Then note that if $[v] = [w]$ then, by completeness, the
  equations $v =_\top w$ and $w =_\top v$ are theorems, which means
  that $\sem{v} = \sem{w}$ by the definition of a model and
  \emph{separability}. This allows us to define a mapping
  $F : \Syn(T) \to \catC$ that sends each type $\typeA$ to
  $\sem{\typeA}$ and each morphism $[v]$ to $\sem{v}$. The fact that
  this mapping is an autonomous functor follows from an analogous
  reasoning to the one used in the proof of Theorem~\ref{theo:classb}.
  We now need to show that this functor is $\VCatSe$-enriched. Recall that $a([v],[w]) = \bigvee \{ q \mid v =_q w \}$ and observe
  that for every $v =_q w$ in the previous quantification we have
  $b(\sem{v},\sem{w}) \geq q$ (by the definition of a model), which
  establishes, by the definition of a least upper bound,
  $a([v],[w]) = \bigvee \{ q \mid v =_q w \} \leq b(\sem{v},\sem{w})$.
\end{proof}

Consider now a $\VCatSe$-enriched autonomous category $\catC$. It
induces a linear $\mathcal{V}\lambda$-theory $\Lang(\catC)$ whose
ground types and operations symbols are defined as in the case of
linear $\lambda$-theories (recall Section~\ref{sec:internal}). The axioms
of $\Lang(\catC)$ are all the $\mathcal{V}$-equations-in-context that
are satisfied by the obvious interpretation on $\catC$.

\begin{theorem}\label{theo:vari}
  The linear $\mathcal{V}\lambda$-theory $\Lang(\catC)$ is varietal.
\end{theorem}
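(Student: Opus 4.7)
The plan is to show that for each pair of types $\typeA$ and $\typeB$ of $\Lang(\catC)$, the underlying class of the separated quotient $\catC(\typeA,\typeB)$ injects into the set $\catC(\sem{\typeA},\sem{\typeB})$. Since $\catC$ is locally small by assumption, this forces $\catC(\typeA,\typeB)$ to have a small carrier, hence to be a small $\mathcal{V}$-category, which is precisely what being varietal requires.

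Concretely, I would define an assignment
\begin{flalign*}
  \phi : \catC(\typeA,\typeB) \longrightarrow \catC(\sem{\typeA},\sem{\typeB}),
  \qquad \phi([v]) = \sem{x : \typeA \vljud v : \typeB},
\end{flalign*}
and verify that it is both well defined and injective. For well-definedness, suppose $[v] = [w]$ in the quotient. By construction this means that both $v =_\top w$ and $w =_\top v$ are theorems of $\Lang(\catC)$, and Theorem~\ref{theo:sound} applied to the canonical interpretation on $\catC$ yields $a(\sem{v},\sem{w}) \geq \top$ and $a(\sem{w},\sem{v}) \geq \top$ in the hom-($\mathcal{V}$-category) $\catC(\sem{\typeA},\sem{\typeB})$. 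Because $\catC$ is $\VCatSe$-enriched, this hom-($\mathcal{V}$-category) is separated, so $\sem{v}$ and $\sem{w}$ must coincide.

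For injectivity, suppose $\sem{v} = \sem{w}$ in $\catC$. Using that $\mathcal{V}$ is integral (so $k = \top$) together with reflexivity of the $\mathcal{V}$-categorical structure on the hom, we get $a(\sem{v},\sem{w}) = a(\sem{v},\sem{v}) \geq k = \top$ and symmetrically $a(\sem{w},\sem{v}) \geq \top$. Hence the $\mathcal{V}$-equations-in-context $x : \typeA \vljud v =_\top w : \typeB$ and $x : \typeA \vljud w =_\top v : \typeB$ are both satisfied by the canonical interpretation. By the very definition of $\Lang(\catC)$ — whose axioms comprise \emph{every} $\mathcal{V}$-equation-in-context satisfied by this interpretation — both are axioms, and so theorems, which means $[v] = [w]$.

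The argument is essentially bookkeeping once the two key facts are isolated: soundness (Theorem~\ref{theo:sound}) to transport theorems to hom-distances in $\catC$, and the fact that $\Lang(\catC)$ is maximal in the sense of containing all $\mathcal{V}$-equations that are semantically true. The only subtlety I anticipate — and hence the main potential obstacle — is making sure the direction "$\sem{v} = \sem{w}$ implies $[v] = [w]$" really uses integrality; without $k$ being top, equality of interpretations would only give $a(\sem{v},\sem{w}) \geq k$, not $\geq \top$, and the quotienting relation (built from $=_\top$ in both directions) would no longer be coarser than equality of denotations.
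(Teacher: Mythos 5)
Your proof is correct, but it takes a genuinely different route from the paper's. The paper reduces the claim to the non-quantitative internal language theorem: it considers the ordinary linear $\lambda$-theory $\Lang^\lambda(\catC)$ generated by $\catC$, notes that $\Syn(\Lang^\lambda(\catC))$ is locally small by Theorem~\ref{theo:internalb}, observes that $v = w$ in $\Lang^\lambda(\catC)$ implies $v =_\top w$ and $w =_\top v$ in $\Lang(\catC)$, and so obtains a \emph{surjection} $\Syn(\Lang^\lambda(\catC))(\typeA,\typeB) \to \Syn(\Lang(\catC))(\typeA,\typeB)$, concluding smallness by the axiom of replacement. You instead build an \emph{injection} $\catC(\typeA,\typeB) \to \catC(\sem{\typeA},\sem{\typeB})$, $[v] \mapsto \sem{x:\typeA \vljud v : \typeB}$, using soundness plus separation of the hom-($\mathcal{V}$-categories) of $\catC$ for well-definedness, and integrality plus the maximality of $\Lang(\catC)$ (its axioms are \emph{all} semantically valid $\mathcal{V}$-equations) for injectivity. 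Your argument is arguably more economical set-theoretically (a class injecting into a set is small without invoking replacement) and does not route through the equivalence $\Syn(\Lang^\lambda(\catC)) \simeq \catC$, though it does lean on the $\VCatSe$-enrichment of $\catC$, which the paper's surjection argument avoids. One small point worth making explicit in your write-up: the identification $[v]=[w]$ in the separated quotient is, a priori, the condition $\top \leq a(v,w) = \bigvee\{q \mid v =_q w\}$ (and symmetrically), and extracting from this that $v =_\top w$ is itself a theorem uses the rules \textbf{(join)}, \textbf{(weak)} and \textbf{(arch)} as in the completeness proof of Theorem~\ref{theo:sound_compl2}; the paper states this identification as given, so you may simply cite it, but it is not literally the definition of the quotient.
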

In conjunction with the proof of Theorem~\ref{theo:sound_compl2}, a
consequence of this last theorem is that $\Syn(\Lang(\catC))$ is a
$\VCatSe$-enriched category. Then we state,
\begin{theorem}[Internal language]\label{theo:qinternal}
  For every $\VCatSe$-enriched autonomous category $\catC$ there exists a
  $\VCatSe$-equivalence of categories
  $\Syn(\Lang(\catC)) \simeq \catC$.
\end{theorem}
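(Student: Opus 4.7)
The plan is to follow exactly the strategy of Theorem~\ref{theo:internalb} but everywhere upgrade \emph{functors} to \emph{$\VCatSe$-functors} and \emph{isomorphisms of hom-sets} to \emph{$\mathcal{V}$-isomorphisms of hom-objects}. First, the obvious interpretation in $\catC$ that acts as the identity on ground types and on operation symbols is, by construction of $\Lang(\catC)$, a model of the theory: every $\mathcal{V}$-equation-in-context satisfied by this interpretation is declared an axiom. By Theorem~\ref{theo:vari}, $\Lang(\catC)$ is varietal, so $\Syn(\Lang(\catC))$ exists as a $\VCatSe$-enriched autonomous category (via the completeness proof of Theorem~\ref{theo:sound_compl2}), and Theorem~\ref{theo:class} hands us a strict autonomous $\VCatSe$-functor $\funF : \Syn(\Lang(\catC)) \to \catC$ that sends $\typeA$ to $\sem{\typeA}$ and $[v]$ to $\sem{v}$.

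Next I would construct the candidate inverse $\funG : \catC \to \Syn(\Lang(\catC))$ which is the identity on objects and sends a $\catC$-morphism $f : X \to Y$ to the equivalence class $[f(x)]$ of the term $x : X \vljud f(x) : Y$. Functoriality and strict preservation of $\otimes$, $\typeI$ and $\multimap$ on objects are immediate from $\Lang(\catC)$'s definition. For the enrichment I need $b(f,g) \leq a([f(x)], [g(x)])$, where $a$ and $b$ are the hom-functions in $\Syn(\Lang(\catC))$ and $\catC$ respectively. But for any $q \in B$ with $q \leq b(f,g)$, the obvious interpretation satisfies $x : X \vljud f(x) =_q g(x) : Y$, so this $\mathcal{V}$-equation lies in the axioms of $\Lang(\catC)$; hence
\begin{flalign*}
a([f(x)], [g(x)]) = \bigvee \{ r \mid f(x) =_r g(x) \text{ is a theorem} \} \geq \bigvee \{ q \in B \mid q \leq b(f,g) \} = b(f,g),
\end{flalign*}
the last equality using that $B$ is a basis of the continuous lattice $\mathcal{V}$. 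Thus $\funG$ is a $\VCatSe$-functor.

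Finally I would exhibit $\funF$ and $\funG$ as mutually quasi-inverse through $\VCatSe$-natural isomorphisms. On the $\catC$ side, $\funF \funG (X) = X$ and on a morphism $f$ we get $\funF\funG(f) = \sem{f(x)} = f$, so $\funF \funG = \id_\catC$ strictly. On the $\Syn(\Lang(\catC))$ side, the type isomorphisms $\typeA \simeq i(\typeA)$ that were built into $\Lang(\catC)$ in~\eqref{eq:iso} assemble into a natural isomorphism $\eta : \id \Rightarrow \funG \funF$ exactly as in the unenriched case; naturality follows from the autonomous axioms. The only genuinely new verification is that $\eta$ is $\VCatSe$-natural, which amounts to checking that its components are morphisms of the syntactic category (automatic) and that the relevant isomorphism of hom-objects is an isomorphism in $\VCatSe$, not merely in $\Set$; this follows because both $\funF$ and $\funG$ are $\VCatSe$-functors and a natural isomorphism between $\VCatSe$-functors whose components lie in the base is automatically $\VCatSe$-natural. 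The main obstacle I anticipate is precisely this last point: ensuring that the type-level isomorphisms preserve the $\mathcal{V}$-metric structure on hom-objects, which ultimately reduces to the fact that the isomorphisms in~\eqref{eq:iso} are structural morphisms of the autonomous structure and the autonomous structure is $\VCatSe$-enriched by Definition~\ref{defn:enr_aut}.
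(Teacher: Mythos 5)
Your proposal is correct and follows essentially the same route as the paper: obtain the forward $\VCatSe$-functor from Theorem~\ref{theo:class} applied to the identity-on-generators model, define the inverse by $f \mapsto [f(x)]$, verify its enrichment by showing $b(f,g) \leq a([f(x)],[g(x)])$ via the basis of the continuous lattice $\mathcal{V}$ (the paper phrases this with $q \ll b(f,g)$ and then notes $\ll$ entails $\leq$, so your use of $q \leq b(f,g)$ is the same computation), and conclude the equivalence using the type isomorphisms $\typeA \simeq i(\typeA)$ exactly as in Theorem~\ref{theo:internalb}. Your extra remarks on $\VCatSe$-naturality of the unit are a harmless elaboration of what the paper leaves implicit.
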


\begin{proof}
  Let $a$ denote the underlying function of the
  hom-($\mathcal{V}$-categories) in $\Syn(\Lang(\catC))$ and $b$ the
  underlying function of the hom-($\mathcal{V}$-categories) in
  $\catC$.  We have, by construction, a model of $\Lang(\catC)$ on
  $\catC$ which acts as the identity in the interpretation of ground
  types and operation symbols. We can then appeal to
  Theorem~\ref{theo:class} to establish a $\VCatSe$-functor
  $\Syn(\Lang(\catC)) \to \catC$. Next, the functor working on the
  inverse direction behaves as the identity on objects and sends a
  morphism $f$ into $[f(x)]$. Let us show that it is
  $\VCatSe$-enriched. First, observe that if $q \ll b(f,g)$ in $\catC$
  and $q \in B$ then $f(x) =_q g(x)$ is a theorem of $\Lang(\catC)$,
  due to the fact that $\ll$ entails $\leq$ and by the definition of
  $\Lang(\catC)$. Using the definition of a basis, we thus obtain
  $b(f,g) = \bigvee \{ q \in B \mid \> q \ll b(f,g) \} \leq \bigvee \{
  q \in B \mid f(x) =_q g(x) \} = a([f(x)],[g(x)])$. The equivalence
  of categories is then shown as in the proof of
  Theorem~\ref{theo:internalb}.
\end{proof}

All the results in this section can be extended straightforwardly to the
case of \emph{symmetric} linear $\mathcal{V}\lambda$-theories and
$\VCatSS$-enriched autonomous categories.

\section{Examples of linear $\mathcal{V}\lambda$-theories and their models}
\label{sec:examples}

\begin{example}[Wait calls]\label{ex:waitcalls}
  We now return to the example of wait calls and the corresponding
  metric axioms~\eqref{ax} sketched in the Introduction. Let us build
  a model over $\Met$ for this theory: fix a metric space $A$,
  interpret the ground type $X$ as $\Nats \otimes A$ and the operation
  symbol $\prog{wait_n} : X \to X$ as the non-expansive map, $
    \sem{\prog{wait_n}} : \Nats \otimes A \to \Nats \otimes A$
    , $(i,a) \mapsto (i + n, a)$.
  Since we already know that $\Met$ is enriched over itself (recall
  Definition~\ref{defn:enr_aut} and Example~\ref{ex:pos_met}) we only
  need to show that the axioms in~\eqref{ax} are satisfied by the
  proposed interpretation. This can be shown via a few routine
  calculations.

  Now, it may be the case that is unnecessary to know the
  \emph{distance} between the execution time of two programs --
  instead it suffices to know whether a program finishes its execution
  \emph{before} another one.  This leads us to linear
  $\mathcal{V}\lambda$-theories where $\mathcal{V}$ is the Boolean
  quantale. We call such theories \emph{linear ordered
    $\lambda$-theories}.  Recall the language from the Introduction
  with a single ground type $X$ and the signature of wait calls
  $\Sigma = \{ \prog{wait_n} : X \to X \mid n \in \Nats \}$. Then we
  adapt the metric axioms \eqref{ax} to the case of the Boolean
  quantale by considering instead:
  \begin{flalign*}
    \prog{wait_0}(x) = x \hspace{1cm} \prog{wait_n}(\prog{wait_m}(x)) =
    \prog{wait_{n + m }}(x) \hspace{1cm}
    \infer{\prog{wait_n}(x) \leq \prog{wait_m}(x)}{n \leq m}
  \end{flalign*}
  where a classical equation $v = w$ is shorthand for $v \leq w$ (\ie
  $v =_1 w$) and $w \leq v$ (\ie $w =_1 v$).  In the resulting theory
  we can consider for instance (and omitting types for simplicity) the
  $\lambda$-term that defines the composition of two functions
  $\lambda f.\ \lambda g.\ g\ (f\ x)$, which we
  denote by $v$, and show that:
  \begin{flalign*} 
    v\ (\lambda x.\ \prog{wait_1}(x)) \leq v\
    (\lambda x.\ \prog{wait_1}(\prog{wait_1}(x)))
  \end{flalign*}
  This inequation between higher-order programs arises from the
  argument $\lambda x.\ \prog{wait_1}(\prog{wait_1}(x))$ being
  costlier than the argument $\lambda x.\ \prog{wait_1}(x)$ --
  specifically, the former will invoke one more wait call
  ($\prog{wait_1}$) than the latter. Moreover, the inequation entails
  that for every argument $g$ the execution time of computation
  $v\ (\lambda x.\ \prog{wait_1}(x))\ g$ will always be smaller than
  that of computation
  $v\ (\lambda x.\ \prog{wait_1}(\prog{wait_1}(x)))\ g$ since it
  invokes one more wait call. Thus in general the inequation tells
  that costlier programs fed as input to $v$ will result in longer
  execution times when performing the corresponding computation.
  In order to build a model for the ordered theory of wait calls, we
  consider a poset $A$ and define a model over $\Pos$ by sending $X$
  into $\Nats \otimes A$ and $\prog{wait_n} : X \to X$ to the monotone
  map $ \sem{\prog{wait_n}} : \Nats \otimes A \to \Nats \otimes A$,
  $(i,a) \mapsto (i + n, a)$.  Since we already know that $\Pos$ is
  enriched over itself (recall Definition~\ref{defn:enr_aut} and
  Example~\ref{ex:pos_met}) we only need to show that the ordered
  axioms are satisfied by the proposed interpretation. But again, this
  can be shown via a few routine calculations.
\end{example}

\begin{example}[Probabilistic programs]
  We consider ground types $\real,\real^+,\unit$ and a signature
  consisting of
  $\{ r : \typeI \to \real \mid r \in \mathbb{Q} \} \cup \{ r^+ :
  \typeI \to \real^+ \mid r \in \mathbb{Q}_{\geq 0}\} \cup \{ r^u :
  \typeI \to \unit \mid r \in [0,1] \cap \mathbb{Q} \}$, an operation
  $+$ of type $\real,\real\to\real$, and \emph{sampling} functions
  $\bern: \real,\real,\unit\to\real$ and
  $\normal:\real,\real^+\to\real$. Whenever no ambiguities arise, we
  drop the superscripts in $r^u$ and $r^+$.  Operationally,
  $\bern(x,y,p)$ generates a sample from the Bernoulli distribution
  with parameter $p$ on the set $\{x,y\}$, whilst $\normal(x,y)$
  generates a normal deviate with mean $x$ and standard deviation $y$.
  We then postulate the metric axiom,
\begin{flalign}
\infer{\bern(x_1,x_2,p(\ast)) =_{\abs{p-q}} \bern(x_1,x_2,q(\ast))}
	{p, q\in [0,1] \cap \mathbb{Q}} \label{eq:bern}
\end{flalign}
We interpret the resulting linear metric $\lambda$-theory in the
category $\Ban$ of Banach spaces and short operators, \ie the
semantics of \cite{DK20a,K81c} without the order structure needed to
interpret \texttt{while} loops.  This is the usual 
representation of Markov chains/kernels as matrices/operators.

\begin{theorem}\label{theo:Ban}
  The category $\Ban$ is a $\Met$-enriched autonomous category, and
  thus an instance of Definition~\ref{defn:enr_aut}.
\end{theorem}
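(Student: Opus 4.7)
The plan is to equip $\Ban$ with its standard symmetric monoidal closed structure and to metrize each hom-set by the operator norm, then verify the compatibilities listed in Definition~\ref{defn:enr_aut}. Concretely, I would take the tensor $X \otimes Y$ to be the projective tensor product $X \ptp Y$, the unit $\typeI$ to be the base field $\R$, and the internal hom $X \multimap Y$ to be the Banach space $B(X,Y)$ of \emph{all} bounded linear operators equipped with the operator norm (whose closed unit ball is precisely the hom-set $\Ban(X,Y)$). The key fact underpinning the closed structure is the universal property of the projective tensor product: the canonical bijection $\Ban(X \ptp Y, Z) \cong \Ban(X, B(Y,Z))$ is an isometry with respect to the operator norms on both sides, natural in $X,Y,Z$. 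I would invoke this as a black box from any standard reference on Banach spaces.

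I would then metrize each hom-set by $d(f,g) := \|f-g\|_{\mathrm{op}}$. Since $f$ and $g$ are short, $d(f,g) \leq \|f\| + \|g\| \leq 2 < \infty$, and the usual norm axioms yield a symmetric, separated extended metric; hence $\Ban(X,Y)$ is an object of $\VCatSS$ when $\mathcal{V}$ is the metric quantale.

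The two main compatibility checks reduce to one-line Banach-space estimates. For composition, telescoping plus shortness gives
\[
  \|g_1 f_1 - g_2 f_2\|_{\mathrm{op}} \leq \|g_1\|_{\mathrm{op}}\|f_1 - f_2\|_{\mathrm{op}} + \|g_1 - g_2\|_{\mathrm{op}}\|f_2\|_{\mathrm{op}} \leq d(f_1,f_2) + d(g_1,g_2),
\]
which is exactly $\Met$-functoriality of composition (recall $\otimes = +$ in the metric quantale). The analogous inequality $\|f_1 \ptp f_2 - g_1 \ptp g_2\|_{\mathrm{op}} \leq d(f_1,g_1) + d(f_2,g_2)$ follows by the same telescoping argument together with the submultiplicativity $\|f \ptp g\|_{\mathrm{op}} \leq \|f\|_{\mathrm{op}}\|g\|_{\mathrm{op}}$ of the projective tensor norm, showing that $\ptp : \Ban \times \Ban \to \Ban$ is a $\Met$-functor.

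Finally, to upgrade the closed structure to a $\Met$-enriched adjunction $(- \ptp X) \dashv B(X,-)$, I would observe that both functors are $\Met$-functors by the estimates above, and that the natural bijection recalled in the first paragraph is by construction an isometry, hence a $\Met$-isomorphism. I do not anticipate a substantive obstacle here; the only point that requires genuine care is to work with the \emph{projective} rather than the injective tensor product, since it is precisely the universal property of $\ptp$ that furnishes the isometric closed-monoidal adjunction required by Definition~\ref{defn:enr_aut}.
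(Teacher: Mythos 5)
Your proposal is correct and follows essentially the same route as the paper: metrize hom-sets by the operator norm, verify shortness of composition and of the projective tensor bifunctor by the same telescoping estimates using $\norm{S}\leq 1$ and $\norm{S\ptp T}=\norm{S}\norm{T}$, and conclude with the isometric hom-bijection $\Ban(X\ptp Y,Z)\simeq\Ban(X,Y\multimap Z)$. The only cosmetic difference is that you import that isometry as a standard fact about the projective tensor product, whereas the paper re-derives it by observing that both directions of the bijection are short.
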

In particular, $\Ban$ forms a model for the theory of our small
probabilistic language via the following interpretation.  We define
$\sem{\real}=\meas\R$, the Banach space of finite Borel measures on
$\R$ equipped with the total variation norm, and similarly
$\sem{\real^+}=\meas\R^+$ and $\sem{\unit}=\meas[0,1]$. We have
$\sem{\typeI}=\R \ni 1$, and for every $r\in\mathbb{Q}$ we put
$\sem{r}:\R\to\meas\R,x\mapsto x\delta_r$, where $\delta_r$ is the
Dirac delta over $r$; thus $\sem{r}(1)=\delta_r$. We define an
analogous interpretation for the operation symbols $r^+$ and
$r^u$. For $\mu, \upsilon \in\meas\R$ we define
$\sem{+}(\mu \otimes \upsilon) \defeq +_{\ast}(\mu\otimes\upsilon)$
the pushforward under $+$ of the product measure
$\mu \otimes \upsilon$ (seen as an element of $\meas\R\otimes\meas\R$,
see \cite{DK20a}). For $\mu,\upsilon,\xi \in\meas\R$ we define
$\sem{\bern} (\mu \otimes \upsilon \otimes \xi)\defeq
\mathrm{bern}_{\ast}(\mu \otimes \upsilon \otimes \xi)$, the
pushforward of the product measure $\mu \otimes \upsilon \otimes \xi$
under the Markov kernel
$\mathrm{bern}: \R^3\to \R, (u,v,p)\mapsto p\delta_u + (1-p)\delta_v$,
and similarly for $\sem{\normal}$ (see \cite{DK20a} for the definition
of pushforward by a Markov kernel).

This interpretation is sound (a proof is given in the Appendix)
because the norm on $\meas\R$ is the total variation norm, and the
metric axiom~\eqref{eq:bern} describes the total variation distance
between the corresponding Bernoulli distributions. Consider now the
following $\lambda$-terms (where we abbreviate the constants
$0(\ast),1(\ast),p(\ast),q(\ast)$ to $0,1,p,q$, respectively),
\begin{flalign*}
& \mathtt{walk1}\defeq \lambda x:\real. \bern(0,x+\normal(0,1),p)
\\
& \mathtt{walk2}\defeq \lambda x:\real. \bern(0,x+\normal(0,1),q), \qquad p,q\in[0,1] \cap \mathbb{Q}.
\end{flalign*}
As the names suggest, these two terms of type $\real\multimap\real$
are denoted by random walks on $\R$. At each call,
$\mathtt{walk1}$ (\resp $\mathtt{walk2}$) performs a jump drawn
randomly from a standard normal distribution, or is forced to return
to the origin with probability $p$ (\resp $q$).  These are
non-standard random walks whose semantics are concretely given by
complicated operators $\meas\R\to\meas\R$, but the simple quantitative
equational system of Fig.~\ref{fig:theo_rules} and the axiom
\eqref{eq:bern} allow us to easily derive
$\mathtt{walk1} = _{\abs{p-q}}\mathtt{walk2}$ without having to
compute the semantics of these terms. In other words, the soundness of
\eqref{eq:bern} is enough to tightly bound the distance between two
non-trivial random walks represented as higher-order terms in a
probabilistic programming language. Furthermore, the 
tensor in the $\lambda$-calculus allows us to easily scale up this reasoning
to random walks in higher dimensions such as
$\mathtt{walk1}\otimes\mathtt{walk2}$ on $\R^2$.
\end{example}

\section{Conclusions and future work}
\label{sec:concl}
We introduced the notion of a $\mathcal{V}$-equation which generalises
the well-established notions of equation,
inequation~\cite{kurz2017quasivarieties,adamek20}, and metric
equation~\cite{mardare16,mardare17}. We then presented a sound and
complete $\mathcal{V}$-equational system for linear
$\lambda$-calculus, illustrated with different examples of programs
containing real-time and probabilistic behaviour.

\noindent
\textbf{Functorial connection to previous work.}  As a concluding
note, let us introduce a simple yet instructive functorial connection
between (1) the categorical semantics of linear $\lambda$-calculus
with the $\mathcal{V}$-equational system, (2) the categorical
semantics of linear $\lambda$-calculus with the equational system of
Section~\ref{sec:internal}, and (3) the algebraic semantics of the
exponential free, multiplicative fragment of linear logic. First we
need to recall some well-known facts. As detailed before, typical
categorical models of linear $\lambda$-calculus and its equational
system are locally small autonomous categories. The latter form a
quasicategory $\Aut$ whose morphisms are autonomous functors. The
usual algebraic models of the exponential free, multiplicative
fragment of linear logic are the so-called
\emph{lineales}~\cite{paiva99}. In a nutshell, a lineale is a poset
$(X,\leq)$ paired with a commutative, monoid operation
$\otimes : X \times X \to X$ that satisfies certain
conditions. Lineales are almost quantales: the only difference is that
they do not require $X$ to be cocomplete. The key idea in algebraic
semantics is that the order $\leq$ in the lineale encodes the logic's
entailment relation. A functorial connection between autonomous
categories and lineales (\ie\ between (2) and (3)) is stated
in~\cite{paiva99} and is based on the following two
observations. First, (possibly large) lineales can be 
seen as \emph{thin} autonomous categories, \ie\ as elements of the
enriched quasicategory $\{0,1\}$-$\Aut$. Second, the inclusion
$\{0,1\}$-$\Aut$ $\hookrightarrow \Aut$ has a left adjoint which
\emph{collapses all morphisms} of a given autonomous category $\catC$
(intuitively, it eliminates the ability of $\catC$ to differentiate
different terms between two types). This provides an adjoint situation
between (2) and (3). We can now expand this connection to our
categorical semantics of linear $\lambda$-calculus and corresponding
$\mathcal{V}$-equational system (\ie\ (1)) in the following way. The
forgetful functor $\VCat \to \Set$ has a left adjoint
$D : \Set \to \VCat$ which sends a set $X$ to $DX = (X,d)$,
\begin{flalign*}
  d(x_1,x_2) = \begin{cases}
    k & \text{ if } x_1 = x _2 \\
    \bot & \text{otherwise} \end{cases}
\end{flalign*}
This left adjoint is strong monoidal, specifically we have
$D(X_1 \times X_2) = DX_1 \otimes DX_2$ and
$\typeI = (1,(\ast,\ast) \mapsto k) = D1$. This gives rise to the
functors,
\[
\begin{tikzcd}
  (\VCat)\text{-} \Aut \arrow[r, shift right=1ex]
  & \Aut \arrow[l, "\hat D"', shift right=1ex]
  \arrow[phantom, l, "\scriptscriptstyle{\bot}"]
  \arrow[r, "c", shift left=1ex]
  & \{0,1\}\text{-}\Aut
  \arrow[phantom, l, "\scriptscriptstyle{\bot}"]
  \arrow[l, shift left=1ex]
\end{tikzcd}
\]
where $\hat D$ equips the hom-sets of an autonomous category with the
corresponding discrete $\mathcal{V}$-category and $c$ collapses all
morphisms of an autonomous category as described earlier. The right
adjoint of $\hat D$ forgets the $\mathcal{V}$-categorical structure
between terms (\ie\ morphisms) and the right adjoint of $c$ is the
inclusion functor mentioned earlier. Note that $\hat D$ restricts to
$(\VCatSe)$-$\Aut$ and $(\VCatSS)$-$\Aut$, and thus we obtain a
functorial connection between the categorical semantics of linear
$\lambda$-calculus with the $\mathcal{V}$-equational system (\ie\
(1)), (2), and (3). In essence, the connection formalises the fact
that our categorical models admit a richer structure over terms (\ie\
morphisms) than the categorical models of linear $\lambda$-calculus
and its classical equational system. The latter in turn permits the
existence of different terms between two types as opposed to the
algebraic semantics of the exponential free, multiplicative fragment
of linear logic. The connection also shows that models for (2) and (3)
can be mapped into models of our categorical semantics by equipping
the respective hom-sets with a trivial, discrete structure.

\noindent
\textbf{Future work.}  Recall that linear $\lambda$-calculus is at the
root of different ramifications of $\lambda$-calculus that relax
resource-based conditions in different ways.  Currently, we are
studying analogous ramifications of linear $\lambda$-calculus in the
$\mathcal{V}$-equational setting, particularly affine and Cartesian
versions. We are also studying the possibility of adding an
exponential modality in order to obtain a \emph{mixed
  linear-non-linear} calculus~\cite{benton1994mixed}. We also started
to explore different definitions of a morphism between
$\mathcal{V}\lambda$-theories and respective categories. This is the
basis to establish a categorical equivalence between a (quasi)category
of $\mathcal{V}\lambda$-theories and a (quasi)category of
$\VCatSe$-enriched autonomous categories.

Next, our main examples of $\mathcal{V}\lambda$-theories (see
Section~\ref{sec:examples}) used either the Boolean or the metric
quantale.  We would like to study linear $\mathcal{V}\lambda$-theories
whose underlying quantales are neither the Boolean nor the metric one,
for example the ultrametric quantale which is (tacitly) used to
interpret Nakano's guarded $\lambda$-calculus \cite{birkedal10} and
also to interpret a higher-order language for functional reactive
programming~\cite{krishnaswami11}. Another interesting quantale is the
G\"{o}del one which is a basis for fuzzy logic~\cite{denecke13} and
whose $\mathcal{V}$-equations give rise to what we call fuzzy
inequations.

Finally we plan to further explore the connections between our work
and different results on metric universal
algebra~\cite{mardare16,mardare17,rosicky20} and inequational
universal
algebra~\cite{kurz2017quasivarieties,adamek20,rosicky20}. For example,
an interesting connection is that the monad construction presented
in~\cite{mardare16} crucially relies on quotienting  a
pseudometric space into a metric space -- this is a particular case of
quotienting a $\mathcal{V}$-category into a separated
$\mathcal{V}$-category (which we crucially use in our work).



\bibliography{biblio}

\begin{thebibliography}{10}

\bibitem{abadi1999core}
Mart{\'\i}n Abadi, Anindya Banerjee, Nevin Heintze, and Jon~G Riecke.
\newblock A core calculus of dependency.
\newblock In {\em Proceedings of the 26th ACM SIGPLAN-SIGACT symposium on
  Principles of programming languages}, pages 147--160, 1999.

\bibitem{adamek20}
Ji{\v{r}}{\'\i} Ad{\'a}mek, Chase Ford, Stefan Milius, and Lutz Schr{\"o}der.
\newblock Finitary monads on the category of posets.
\newblock {\em arXiv preprint arXiv:2011.14796}, 2020.

\bibitem{benton1994mixed}
Nick Benton.
\newblock A mixed linear and non-linear logic: Proofs, terms and models.
\newblock In {\em International Workshop on Computer Science Logic}, pages
  121--135. Springer, 1994.

\bibitem{benton92}
Nick Benton, Gavin Bierman, Valeria de~Paiva, and Martin Hyland.
\newblock {\em Term assignment for intuitionistic linear logic (preliminary
  report)}.
\newblock Citeseer, 1992.

\bibitem{birkedal10}
Lars Birkedal, Jan Schwinghammer, and Kristian St{\o}vring.
\newblock A metric model of lambda calculus with guarded recursion.
\newblock In {\em FICS}, pages 19--25, 2010.

\bibitem{Bor94a}
Francis Borceux.
\newblock {\em Handbook of categorical algebra}, volume~2 of {\em Encyclopedia
  of Mathematics and its Applications}.
\newblock Cambridge University Press, 1994.

\bibitem{crubille2015metric}
Rapha{\"e}lle Crubill{\'e} and Ugo Dal~Lago.
\newblock Metric reasoning about $\lambda$-terms: The affine case.
\newblock In {\em 2015 30th Annual ACM/IEEE Symposium on Logic in Computer
  Science}, pages 633--644. IEEE, 2015.

\bibitem{crubille2017metric}
Rapha{\"e}lle Crubill{\'e} and Ugo Dal~Lago.
\newblock {Metric Reasoning About $lambda$-Terms: The General Case}.
\newblock In {\em European Symposium on Programming}, pages 341--367. Springer,
  2017.

\bibitem{DK20a}
Fredrik Dahlqvist and Dexter Kozen.
\newblock Semantics of higher-order probabilistic programs with conditioning.
\newblock In {\em Proc. 47th ACM SIGPLAN Symp. Principles of Programming
  Languages (POPL'20)}, pages 57:1--29, New Orleans, January 2020. ACM.

\bibitem{paiva99}
Valeria De~Paiva.
\newblock Lineales: algebraic models of linear logic from a categorical
  perspective.
\newblock In {\em Proceedings of LLC8}, 1999.

\bibitem{denecke13}
Klaus Denecke, Marcel Ern{\'e}, and Shelly~L Wismath.
\newblock {\em Galois connections and applications}, volume 565.
\newblock Springer Science \& Business Media, 2013.

\bibitem{gaboardi2016combining}
Marco Gaboardi, Shin-ya Katsumata, Dominic Orchard, Flavien Breuvart, and Tarmo
  Uustalu.
\newblock Combining effects and coeffects via grading.
\newblock {\em ACM SIGPLAN Notices}, 51(9):476--489, 2016.

\bibitem{gavazzo18}
Francesco Gavazzo.
\newblock Quantitative behavioural reasoning for higher-order effectful
  programs: Applicative distances.
\newblock In {\em Proceedings of the 33rd Annual ACM/IEEE Symposium on Logic in
  Computer Science}, pages 452--461, 2018.

\bibitem{GHK+03}
Gerhard Gierz, Karl~Heinrich Hofmann, Klaus Keimel, Jimmie~D. Lawson,
  Michael~W. Mislove, and Dana~S. Scott.
\newblock {\em Continuous lattices and domains}, volume~93 of {\em Encyclopedia
  of Mathematics and its Applications}.
\newblock Cambridge University Press, Cambridge, 2003.

\bibitem{girard1992bounded}
Jean-Yves Girard, Andre Scedrov, and Philip~J Scott.
\newblock Bounded linear logic: a modular approach to polynomial-time
  computability.
\newblock {\em Theoretical computer science}, 97(1):1--66, 1992.

\bibitem{JGL-topology}
Jean Goubault{-}Larrecq.
\newblock {\em Non-{H}ausdorff Topology and Domain Theory---Selected Topics in
  Point-Set Topology}, volume~22 of {\em New Mathematical Monographs}.
\newblock Cambridge University Press, March 2013.

\bibitem{hofmann20}
Dirk Hofmann and Pedro Nora.
\newblock Hausdorff coalgebras.
\newblock {\em Applied Categorical Structures}, 28(5):773--806, 2020.

\bibitem{hung19}
Shih-Han Hung, Kesha Hietala, Shaopeng Zhu, Mingsheng Ying, Michael Hicks, and
  Xiaodi Wu.
\newblock Quantitative robustness analysis of quantum programs.
\newblock {\em Proceedings of the ACM on Programming Languages}, 3(POPL):1--29,
  2019.

\bibitem{kelly82}
Gregory~Maxwell Kelly.
\newblock {\em Basic concepts of enriched category theory}, volume~64.
\newblock CUP Archive, 1982.

\bibitem{K81c}
Dexter Kozen.
\newblock Semantics of probabilistic programs.
\newblock {\em J. Comput. Syst. Sci.}, 22(3):328--350, June 1981.
\newblock \href {https://doi.org/10.1016/0022-0000(81)90036-2}
  {\path{doi:10.1016/0022-0000(81)90036-2}}.

\bibitem{krishnaswami11}
Neelakantan~R Krishnaswami and Nick Benton.
\newblock Ultrametric semantics of reactive programs.
\newblock In {\em 2011 IEEE 26th Annual Symposium on Logic in Computer
  Science}, pages 257--266. IEEE, 2011.

\bibitem{kurz2017quasivarieties}
Alexander Kurz and Ji{\v{r}}{\'\i} Velebil.
\newblock Quasivarieties and varieties of ordered algebras: regularity and
  exactness.
\newblock {\em Mathematical Structures in Computer Science}, 27(7):1153--1194,
  2017.

\bibitem{lambek88}
Joachim Lambek and Philip~J Scott.
\newblock {\em Introduction to higher-order categorical logic}, volume~7.
\newblock Cambridge University Press, 1988.

\bibitem{lawvere73}
F~William Lawvere.
\newblock Metric spaces, generalized logic, and closed categories.
\newblock {\em Rendiconti del seminario mat{\'e}matico e fisico di Milano},
  43(1):135--166, 1973.

\bibitem{linton66}
Fred E.~J. Linton.
\newblock Some aspects of equational categories.
\newblock In {\em Proceedings of the Conference on Categorical Algebra}, pages
  84--94. Springer, 1966.

\bibitem{maclane98}
Saunders Mac~Lane.
\newblock {\em Categories for the working mathematician}, volume~5.
\newblock springer, 1998.

\bibitem{mackie93}
Ian Mackie, Leopoldo Rom{\'a}n, and Samson Abramsky.
\newblock An internal language for autonomous categories.
\newblock {\em Applied Categorical Structures}, 1(3):311--343, 1993.

\bibitem{maietti05}
Maria~Emilia Maietti, Paola Maneggia, Valeria De~Paiva, and Eike Ritter.
\newblock Relating categorical semantics for intuitionistic linear logic.
\newblock {\em Applied categorical structures}, 13(1):1--36, 2005.

\bibitem{mardare16}
Radu Mardare, Prakash Panangaden, and Gordon Plotkin.
\newblock Quantitative algebraic reasoning.
\newblock In {\em Proceedings of the 31st Annual ACM/IEEE Symposium on Logic in
  Computer Science}, pages 700--709, 2016.

\bibitem{mardare17}
Radu Mardare, Prakash Panangaden, and Gordon Plotkin.
\newblock On the axiomatizability of quantitative algebras.
\newblock In {\em 2017 32nd Annual ACM/IEEE Symposium on Logic in Computer
  Science (LICS)}, pages 1--12. IEEE, 2017.

\bibitem{orchard2019quantitative}
Dominic Orchard, Vilem-Benjamin Liepelt, and Harley Eades~III.
\newblock Quantitative program reasoning with graded modal types.
\newblock {\em Proceedings of the ACM on Programming Languages}, 3(ICFP):1--30,
  2019.

\bibitem{paseka00}
Jan Paseka and Ji{\v{r}}{\'\i} Rosick{\`y}.
\newblock Quantales.
\newblock In {\em Current research in operational quantum logic}, pages
  245--262. Springer, 2000.

\bibitem{pistone21}
Paolo Pistone.
\newblock {On Generalized Metric Spaces for the Simply Typed
  $\lambda$-Calculus}.
\newblock In {\em Proceedings of the 36th Annual ACM/IEEE Symposium on Logic in
  Computer Science}, 2021.

\bibitem{reed10}
Jason Reed and Benjamin~C Pierce.
\newblock Distance makes the types grow stronger: A calculus for differential
  privacy.
\newblock In {\em Proceedings of the 15th ACM SIGPLAN international conference
  on Functional programming}, pages 157--168, 2010.

\bibitem{rosicky20}
Ji{\v{r}}{\'\i} Rosick{\`y}.
\newblock Metric monads.
\newblock {\em arXiv preprint arXiv:2012.14641}, 2020.

\bibitem{ryan2013introduction}
Raymond~A Ryan.
\newblock {\em Introduction to tensor products of Banach spaces}.
\newblock Springer Science \& Business Media, 2013.

\bibitem{shulman19}
Michael Shulman.
\newblock A practical type theory for symmetric monoidal categories.
\newblock {\em arXiv preprint arXiv:1911.00818}, 2019.

\bibitem{stubbe14}
Isar Stubbe.
\newblock An introduction to quantaloid-enriched categories.
\newblock {\em Fuzzy Sets and Systems}, 256:95--116, 2014.

\bibitem{velebil19}
Ji{\v{r}}{\'\i} Velebil, Alexander Kurz, and Adriana Balan.
\newblock Extending set functors to generalised metric spaces.
\newblock {\em Logical Methods in Computer Science}, 15, 2019.

\end{thebibliography}

\appendix

\section{Linear $\mathcal{V}\lambda$-theories and linear quantales}
\label{sec:simpl}
We briefly study linear $\mathcal{V}\lambda$-theories where
$\mathcal{V}$ is a quantale with a \emph{linear order}. The latter
condition is respected by the Boolean and (ultra)metric quantales
(mentioned in the main text). Recall that metric universal
algebra~\cite{mardare16,mardare17} tacitly uses the metric
quantale.

\begin{theorem}\label{theo:simpl}
  Assume that the underlying order of $\mathcal{V}$ is linear and
  consider a (symmetric) linear
  $\mathcal{V}\lambda$-theory. Substituting the rule below on the left
  by the one below on the right does not change the theory.
  \begin{flalign*}
    \infer{v =_{\vee q_i} w}{\forall i \leq n.\ v =_{q_i} w} \hspace{2cm}
    \infer{v =_\bot w}{}
  \end{flalign*}
\end{theorem}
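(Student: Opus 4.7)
My plan is to establish the interderivability of the two rules in the presence of the remaining rules of Fig.~\ref{fig:theo_rules}. Concretely, I would show that each rule is admissible when the other is assumed, from which it follows that the classes $Th(Ax)$ of theorems coincide for the two presentations.

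First I would handle the direction \textbf{(join)} $\Rightarrow$ $v =_\bot w$, which does not even require linearity. Instantiating \textbf{(join)} at the empty family ($n = 0$) gives the empty join $\bot$ as the label, and the universally quantified premise is vacuously satisfied, so $v =_\bot w$ is derivable from \textbf{(join)} alone. This is already noted in the discussion following Fig.~\ref{fig:theo_rules}, so I would only reprise it briefly.

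For the converse I would assume the axiom $v =_\bot w$ together with the remaining rules and derive every instance of \textbf{(join)} by splitting on the cardinality of the premise family. The case $n = 0$ is exactly the new axiom. For $n \geq 1$, the key observation is that in a linearly ordered lattice every finite nonempty join coincides with the maximum of its arguments, so $\bigvee_{i \leq n} q_i = q_j$ for some $j \leq n$; the premise $v =_{q_j} w$ is then already the desired conclusion $v =_{\bigvee q_i} w$. Nothing in the argument interacts with the symmetry rule, so the same proof covers both the linear and the symmetric $\mathcal{V}\lambda$-theory versions.

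The only mild subtlety is the $n = 0$ edge case, which is precisely what the new axiom isolates; beyond it, linearity makes \textbf{(join)} redundant on the nose, with no appeal to \textbf{(weak)} or any other congruence rule required. I do not expect any hidden obstacle: the whole content of the theorem is that in a linear quantale the only informative instance of \textbf{(join)} is the empty one.
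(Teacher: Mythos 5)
Your proposal is correct and follows essentially the same argument as the paper: the $n=0$ instance of \textbf{(join)} is exactly the new axiom, and for $n \geq 1$ linearity forces $\bigvee q_i$ to equal one of the $q_i$, so the corresponding premise already yields the conclusion. No further comment is needed.
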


\begin{proof}
  Clearly, the rule on the left subsumes the one on the right by
  choosing $n = 0$. So we only need to show the inverse direction
  under the assumption that $\mathcal{V}$ is linear. Thus, assume that
  $\forall i \leq n.\ v =_{q_i} w$. We proceed by case distinction. If
  $n = 0$ then we need to show that $v =_\bot w$ which is given
  already by the rule on the right.  Suppose now that $n > 0$. Then
  since the order of $\mathcal{V}$ is linear the value $\vee q_i$ must
  already be one of the values $q_i$ and $v =_{q_i} w$ is already part
  of the theory. In other words, in case of $n > 0$ the rule on the
  left is redundant.
\end{proof}

The above result is in accordance with metric universal
algebra~\cite{mardare16,mardare17} which also does not include rule
\textbf{(join)}. Interestingly, however, we still have $v =_\bot w$
for all $\lambda$-terms $v$ and $w$ and such a rule is not present
in~\cite{mardare16,mardare17}. This is explained by the fact that
metric equations in~\cite{mardare16,mardare17} are labelled
\emph{only} by non-negative rational numbers whilst we also permit
infinity to be a label (in our case, labels are given by a basis $B$
which for the metric case corresponds to the \emph{extended}
non-negative rational numbers). All remaining rules of our
$\mathcal{V}$-equational system instantiated to the metric case find a
counterpart in the metric equational system presented
in~\cite{mardare16,mardare17}.

Next, note that if the quantale $\mathcal{V}$ is finite then
for all $q \in \mathcal{V}$ we have $q \ll q$ which means that rule
\textbf{(arch)} is no longer necessary. This observation is applicable
to the Boolean quantale.

\section{Lemmata and omitted proofs}

\begin{proof}[Proof sketch of Lemma~\ref{lem:unique}]
  The proof follows by induction over the structure of
  $\lambda$-terms. Here we only consider the case $f(v_1,\dots,v_n)$,
  because the other cases follow analogously.

  Suppose that $E \vljud f(v_1,\dots,v_n) : \typeA$.  Then according
  to the typing system it is necessarily the case that the previous
  derivations were $\Gamma_i \vljud v_i : \typeA$ for all $i \leq n$
  with $E \in \Shuff(\Gamma_1;\dots;\Gamma_n)$ for some family of
  contexts $(\Gamma_i)_{i \leq n}$. The only room for choice is
  therefore in choosing the contexts $\Gamma_i$. We will show that
  even this choice is unique. Consider two families
  $(\Gamma_i)_{i \leq n}$ and $(\Gamma'_i)_{i \leq n}$ such that
  $\Gamma_i \vljud v_i : \typeA$ and $\Gamma'_i \vljud v_i : \typeA$
  for all $i \leq n$, and moreover
  $E \in \Shuff(\Gamma_1;\dots;\Gamma_n)$ and
  $E \in \Shuff(\Gamma'_1;\dots;\Gamma'_n)$. Since
  $\Gamma_i \vljud v_i : \typeA_i$ and
  $\Gamma'_i \vljud v_i : \typeA_i$ we deduce (by linearity) that
  $\Gamma_i$ is a permutation of $\Gamma'_i$. Consequently, since
  $E \in \Shuff(\Gamma_1,\dots,\Gamma_n)$,
  $E \in \Shuff(\Gamma'_1,\dots,\Gamma'_n)$ and $E$ (by the definition
  of a shuffle) cannot change the relative order of the elements in
  $\Gamma_i$ and $\Gamma'_i$ for all $i \leq n$, it must be the case
  that $\Gamma_i = \Gamma'_i$ for all $i \leq n$. In other words, the
  choice of $(\Gamma_i)_{i \leq n}$ is fixed \emph{a priori}. The
  proof now follows by applying the induction hypothesis to each
  $v_i$.
\end{proof}

\begin{proof}[Proof sketch of Lemma~\ref{lem:exch_subst}]
  We focus first on the exchange rule. The proof follows by induction
  over the structure of derivations. Here we only consider the case
  $\Gamma, x : \typeA, y : \typeB, \Delta \vljud f(v_1,\dots,v_n) :
  \typeC$,  the other cases follow analogously.

  Suppose that
  $\Gamma, x : \typeA, y : \typeB, \Delta \vljud f(v_1,\dots,v_n) :
  \typeC$ with
  $\Gamma, x : \typeA, y : \typeB, \Delta \in
  \Shuff(\Gamma_1;\dots;\Gamma_n)$. We proceed by case distinction:
  assume first that both $x : \typeA$ and $y : \typeB$ are in some
  $\Gamma_i$, with $i \leq n$. We can thus decompose $\Gamma_i$ into
  $\Gamma^1_i, x : \typeA, y : \typeB, \Gamma^2_i$. Then we apply the
  induction hypothesis on $\Gamma_i \vljud v_i : \typeA_i$ and proceed
  by observing that if
  $\Gamma, x : \typeA, y : \typeB, \Delta \in
  \Shuff(\Gamma_1;\dots;(\Gamma^1_i, x : \typeA, y : \typeB,
  \Gamma^2_i); \dots; \Gamma_n)$ then it is also the case that
  $\Gamma, y : \typeB, x : \typeA, \Delta \in
  \Shuff(\Gamma_1;\dots;(\Gamma^1_i, y : \typeB, x : \typeA,
  \Gamma^2_i); \dots; \Gamma_n)$.  Assume now that $x : \typeA$ is in
  some $\Gamma_i$ and $y : \typeB$ is in some $\Gamma_j$ with
  $i \not = j$. Then since
  $\Gamma, x : \typeA, y : \typeB, \Delta \in
  \Shuff(\Gamma_1;\dots;\Gamma_n)$ it must be the case that
  $\Gamma, y : \typeB, x : \typeA, \Delta \in
  \Shuff(\Gamma_1;\dots;\Gamma_n)$ so we only need to apply  rule
  $\rulename{ax}$.

  Let us now focus on the substitution rule. The proof follows by
  induction over the structure of derivations and also by the exchange
  rule that was just proved. We exemplify this with rule
  $\rulename{ax}$. The other cases follow analogously.

  Suppose that $\Gamma, x : \typeA \vljud f(v_1,\dots,v_n) : \typeB$.
  Then for all $i \leq n$ we have $\Gamma_i \vljud v_i : \typeA_i$ and
  $\Gamma, x : \typeA \in \Shuff(\Gamma_1 ; \dots ; \Gamma_n)$.  By
  linearity and by the definition of a shuffle there exists exactly
  one $\Gamma_i$ that can be decomposed into
  $\Gamma_i = \Gamma'_i, x : \typeA$. We then use the induction
  hypothesis to obtain $\Gamma'_i, \Delta \vljud v_i[w/x] : \typeA_i$.
  Now observe that if
  $\Gamma, x : \typeA \in \Shuff(\Gamma_1; \dots ; (\Gamma'_i, x :
  \typeA ); \dots ; \Gamma_n)$ then
  $\Gamma, \Delta \in \Shuff(\Gamma_1; \dots ; (\Gamma'_i, \Delta) ;
  \dots ; \Gamma_n)$. We use this last observation to build
  $\Gamma, \Delta \vljud f(v_1,\dots,v_i[w/x], \dots, v_n) =
  f(v_1,\dots,v_n)[w/x] : \typeB$. 
\end{proof}
In order to keep calculations in the following proofs legible we will
sometimes abbreviate a denotation $\sem{\Gamma \vljud v : \typeA}$ to
$\sem{\Gamma \vljud v}$ or even just $\sem{v}$.

\begin{lemma}[Exchange and Substitution]
  \label{lem:exch_subst_inter}
  Consider judgements
  $\Gamma, x : \typeA, y : \typeB, \Delta \vljud v : \typeC$,
  $\Gamma, x : \typeA \vljud v : \typeB$, and
  $\Delta \vljud w : \typeA$. Then the following equations hold in
  every autonomous category $\catC$:
  \begin{flalign*}
    \sem{\Gamma, x : \typeA, y : \typeB, \Delta \vljud v : \typeC} & =
    \sem{\Gamma, y : \typeB, x : \typeA, \Delta \vljud v : \typeC}
    \comp \exch_{\Gamma, \underline{\typeA,\typeB}, \Delta} \\
    \sem{\Gamma, \Delta \vljud v[w / x] : \typeB} & =
    \sem{\Gamma, x : \typeA \vljud v : \typeB} \comp
    \join_{\Gamma;\typeA} \comp (\id \otimes \sem{\Delta \vljud w : \typeA})
    \comp \spl_{\Gamma;\Delta}
  \end{flalign*}
\end{lemma}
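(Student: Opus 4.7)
The plan is to prove both equations by induction on the structure of the derivation of the judgement concerning $v$ (which by Lemma~\ref{lem:unique} is unique, so we may really do induction on the term $v$). The two statements are largely independent, but the substitution case will appeal to the exchange equation in order to reshuffle contexts, so I would prove the exchange equation first.

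\textbf{Exchange.} I would induct on the derivation of $\Gamma, x:\typeA, y:\typeB, \Delta \vljud v:\typeC$. The base case $\rulename{hyp}$ is degenerate (the context has a single variable, so this case only arises vacuously). The non-trivial work happens in the structural rules $\rulename{ax}$, $\rulename{\typeI_e}$, $\rulename{\otimes_i}$, $\rulename{\otimes_e}$, and $\rulename{\multimap_e}$, each of which carries a shuffle $E \in \Shuff(\Gamma_1;\dots;\Gamma_n)$. In each such case I would do a secondary case analysis on the position of $x:\typeA$ and $y:\typeB$: (i) both sit inside the same $\Gamma_i$ (apply the induction hypothesis inside the premise for $v_i$ and observe that the outer shuffle $\sh_E$ absorbs the induced $\exch$); (ii) they sit in different $\Gamma_i,\Gamma_j$ (no recursive call is needed since both shuffles $\Gamma,x,y,\Delta$ and $\Gamma,y,x,\Delta$ are valid elements of $\Shuff(\Gamma_1;\dots;\Gamma_n)$, and the only difference between the two interpretations is a prepended $\exch$); (iii) one of $x,y$ lies in the binding extension added by $\rulename{\otimes_e}$ or $\rulename{\multimap_i}$, which reduces to naturality of $\exch$ against $\join$. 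The remaining rules $\rulename{\typeI_i}$ and $\rulename{\multimap_i}$ are either vacuous (empty context) or reduce by transposing across the adjunction $(-\otimes\sem{\typeA}) \dashv (\sem{\typeA} \multimap -)$.

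\textbf{Substitution.} With the exchange lemma in hand, I would induct on the derivation of $\Gamma, x:\typeA \vljud v:\typeB$, using the substitution equation in Lemma~\ref{lem:exch_subst} at the term level to determine the shape of $v[w/x]$. The $\rulename{hyp}$ case is $v=x$, where $v[w/x]=w$, and the required equation unfolds to the unit/counit laws of $\typeI$ combined with $\join_{-;\typeA}\comp(\id\otimes\sem{w})\comp\spl_{-;\Delta} = \sem{w}$. For every other rule, $x$ lives in exactly one of the premise contexts (by linearity), and the inductive hypothesis yields the denotation of the substituted premise; the remaining bookkeeping is to reassemble the splitting, shuffle and joining morphisms. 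Two coherence sub-lemmas will be used repeatedly: $\spl$ and $\join$ form an isomorphism pair, and the naturality diagram $\spl_{\Gamma_1,\Delta; \Gamma_2;\dots;\Gamma_n}\comp\sh_{E'} = (\sh_{E_1}\otimes\id\otimes\dots\otimes\id)\comp\spl_{\Gamma_1,\dots,\Gamma_n;\Delta}$ relating a shuffle of $(\Gamma_i)$-contexts extended by $\Delta$ to a shuffle of the smaller family. For the higher-order rules $\rulename{\multimap_i}$ and $\rulename{\multimap_e}$, the argument uses the same adjoint transposition as in the soundness proof of Theorem~\ref{theo:bsound}.

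\textbf{Main obstacle.} The genuinely painful step is the $\rulename{ax}$ rule in the exchange proof when $x$ and $y$ belong to distinct $\Gamma_i$'s: one has to argue purely at the level of shuffles that the two resulting composites $(\sem{f}\comp(m_1\otimes\dots\otimes m_n)\comp\spl\comp\sh_E)$ differ exactly by precomposition with $\exch_{\Gamma,\underline{\typeA,\typeB},\Delta}$. This reduces to a combinatorial identity on permutations, namely that the shuffles on $\Gamma,x,y,\Delta$ and on $\Gamma,y,x,\Delta$ are related by the adjacent transposition of $x$ and $y$, together with functoriality of the $n$-fold tensor. Once this shuffle-transposition identity is isolated as a separate auxiliary lemma, all remaining cases of both equations follow by routine diagram chasing using the autonomous coherence laws.
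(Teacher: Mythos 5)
Your proposal matches the paper's own proof: the same induction over the (unique) derivation, the same case split in the $\rulename{ax}$ case on whether $x:\typeA$ and $y:\typeB$ lie in the same premise context $\Gamma_i$ or in distinct ones, and the same reliance on naturality and symmetric monoidal coherence for the bookkeeping with $\spl$, $\join$, and $\sh$. The only cosmetic difference is emphasis --- the paper treats the distinct-$\Gamma_i$ case as the immediate one (via the identity $\sh_{\Gamma,\typeB,\typeA,\Delta}\comp\exch_{\Gamma,\underline{\typeA,\typeB},\Delta}=\sh_{\Gamma,\typeA,\typeB,\Delta}$, which is exactly the shuffle-transposition fact you isolate) and spells out the same-$\Gamma_i$ case instead.
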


\begin{proof}[Proof sketch]
  For both cases the proof follows by induction over the structure of
  derivations. Here we only consider rule $\rulename{ax}$, because the
  other ones follow analogously. In many of the calculations below we
  will tacitly perform simple diagram chases that take advantage of
  naturality, functoriality, and the coherence theorem of symmetric
  monoidal categories.

  We start with the exchange property. Suppose that
  $\Gamma, x : \typeA, y : \typeB, \Delta \vljud f(v_1,\dots,v_n) :
  \typeC$.  We proceed by case distinction: first, consider the case
  in which $x : \typeA \in \Gamma_i$ and $y : \typeB \in \Gamma_j$
  with $i \not = j$. The proof then follows directly by observing that
  the two corresponding shuffling morphisms
  $\sh_{\Gamma, \typeA, \typeB, \Delta} : \sem{\Gamma, x : \typeA, y :
    \typeB, \Delta} \to \sem{\Gamma_1,\dots,\Gamma_n}$ and
  $\sh_{\Gamma, \typeB, \typeA, \Delta} : \sem{\Gamma, y : \typeB, x :
    \typeA, \Delta} \to \sem{\Gamma_1,\dots,\Gamma_n}$ satisfy the
  equation
  $\sh_{\Gamma,\typeB,\typeA,\Delta} \comp
  \exch_{\Gamma,\underline{\typeA,\typeB},\Delta} =
  \sh_{\Gamma,\typeA,\typeB,\Delta}$. Consider now the case in which
  $x : \typeA \in \Gamma_i$ and $y : \typeB \in \Gamma_i$ for some
  $i \leq n$. We then calculate:
  \begin{flalign*}
   & \,
   \sem{\Gamma, x : \typeA, y : \typeB, \Delta \vljud f(v_1,\dots,v_n) :
   \typeC}
   = \sem{f} \comp (\sem{v_1} \otimes \dots \otimes \sem{v_i}
   \otimes \dots \otimes
   \sem{v_n}) \comp
   \spl_{\Gamma_1;\dots;\Gamma_n} \comp \sh_{\Gamma, \typeA, \typeB, \Delta} \\
   & =  \sem{f} \comp (\sem{v_1}
   \otimes \dots \otimes (\sem{v_i} \comp
   \exch_{\Gamma^1_i,\underline{\typeA,\typeB},\Gamma^2_i})
   \otimes \dots \otimes
   \sem{v_n}) \comp \spl_{\Gamma_1;\dots;\Gamma_n} \comp
   \sh_{\Gamma, \typeA, \typeB, \Delta} \\
   & =  \sem{f} \comp (\sem{v_1}
   \otimes \dots \otimes \sem{v_n}) \comp (\id \otimes \dots \otimes
   \exch_{\Gamma^1_i,\underline{\typeA,\typeB},\Gamma^2_i} \otimes \dots \otimes \id)
   \comp \spl_{\Gamma_1;\dots;\Gamma_n} \comp
   \sh_{\Gamma, \typeA, \typeB, \Delta} \\
   & =  \sem{f} \comp (\sem{v_1}
   \otimes \dots \otimes \sem{v_n}) 
   \comp \spl_{\Gamma'_1;\dots;\Gamma'_n} \comp
   \exch_{\Gamma_1,\dots,\Gamma^1_i, \underline{\typeA, \typeB}, \Gamma^2_i, \dots
   \Gamma_n} \comp
   \sh_{\Gamma, \typeA, \typeB, \Delta} \\
   & =  \sem{f} \comp (\sem{v_1}
   \otimes \dots \otimes \sem{v_n}) 
   \comp \spl_{\Gamma'_1;\dots;\Gamma'_n} \comp \sh_{\Gamma, \typeB, \typeA, \Delta}
   \comp \exch_{\Gamma, \underline{\typeA, \typeB}, \Delta} \\
   & = \sem{\Gamma, y : \typeB, x : \typeA, \Delta \vljud
   f(v_1,\dots,v_n) : \typeC} \comp \exch_{\Gamma, \underline{\typeA, \typeB}, \Delta}
   \end{flalign*}

 Let us now focus on proving the substitution lemma for
 rule $\rulename{ax}$:
 \begin{flalign*}
   & \, \sem{\Gamma, \Delta \vljud f(v_1,\dots,v_n) [w/x] : \typeB} =
   \sem{\Gamma, \Delta \vljud f(v_1,\dots,v_i[w/x], \dots, v_n)} \\
   & = \sem{f} \comp (\sem{v_1}
   \otimes \dots \otimes \sem{v_i[w/x]} \otimes \dots \otimes
   \sem{v_n}) \comp \spl_{\Gamma_1;\dots;\Gamma_n} \comp
   \sh_{\Gamma, \Delta}  \\
   & = \sem{f} \comp (\sem{v_1}
    \otimes \dots \otimes (\sem{v_i} \comp \join_{\Gamma'_i;\typeA} \comp
    (\id \otimes \sem{w}) \comp \spl_{\Gamma'_i;\Delta}) \otimes \dots \otimes
   \sem{v_n}) \comp \spl_{\Gamma_1;\dots;\Gamma_n} \comp
   \sh_{\Gamma, \Delta}  \\
   & = \sem{f} \comp (\sem{v_1} \otimes \dots \otimes \sem{v_n}) \comp 
   (\id \otimes \dots \otimes (\join_{\Gamma'_i;\typeA} \comp
   (\id \otimes \sem{w}) \comp
   \spl_{\Gamma'_i;\Delta}) \otimes \dots \otimes \id)
   \comp \spl_{\Gamma_1;\dots;\Gamma_n} \comp \sh_{\Gamma, \Delta}  \\
   & = \sem{f} \comp (\sem{v_1} \otimes \dots  \otimes \sem{v_n}) 
   \comp \spl_{\Gamma_1;\dots;\Gamma'_i,\typeA;\dots\Gamma_n}
   \comp \sh_{\Gamma, \typeA} \comp \join_{\Gamma;\typeA} \comp
   (\id \otimes \sem{w}) \comp \spl_{\Gamma;\Delta} \\
   & = \sem{\Gamma, x : \typeA \vljud f(v_1,\dots,v_n)} \comp
   \join_{\Gamma;\typeA} \comp
   (\id \otimes \sem{w}) \comp \spl_{\Gamma;\Delta}
   \end{flalign*}
\end{proof}

\begin{proof}[Proof sketch of Theorem~\ref{theo:bsound}]
  The proof follows by an appeal to Lemma~\ref{lem:exch_subst_inter},
  the coherence theorem for symmetric monoidal categories, and
  naturality. We exemplify this with the commuting conversions.
    \begin{flalign*}
    & \, \sem{\Gamma, \Delta,E \vljud u[v\ \prog{to}\ \ast.\ w/x]
      : \typeB} = \sem{u}  \comp
    \join_{\Gamma;\typeA} \comp (\id \otimes \sem{v\ \prog{to}\ \ast.\ w})
    \comp \spl_{\Gamma; \Delta,E} \\
    & = \sem{u} \comp \join_{\Gamma;\typeA} \comp (\id \otimes
    (\sem{w} \comp \lambda \comp (\sem{v} \otimes \id) \comp
    \spl_{\Delta;E})) \comp \spl_{\Gamma;\Delta,E} \\
    & = \sem{u} \comp \join_{\Gamma;\typeA} \comp (\id \otimes \sem{w})
    \comp (\id \otimes
    (\lambda \comp (\sem{v} \otimes \id) \comp
    \spl_{\Delta;E})) \comp \spl_{\Gamma;\Delta,E} \\
    & = \sem{u} \comp \join_{\Gamma;\typeA} \comp (\id \otimes \sem{w})
    \comp \spl_{\Gamma;E} \comp \join_{\Gamma;E} \comp (\id \otimes
    (\lambda \comp (\sem{v} \otimes \id) \comp
    \spl_{\Delta;E})) \comp \spl_{\Gamma;\Delta,E} \\
    & = \sem{u[w/x]} \comp \join_{\Gamma;E} \comp (\id \otimes
    (\lambda \comp (\sem{v} \otimes \id) \comp
    \spl_{\Delta;E})) \comp \spl_{\Gamma;\Delta,E} \\
    & = \sem{u[w/x]} \comp \lambda \comp (\sem{v} \otimes \id) \comp
    \spl_{\Delta;\Gamma,E} \comp \sh_{\Gamma,\Delta,E} \\
    & = \sem{\Gamma,\Delta,E \vljud v\ \prog{to}\ \ast.\ u[w/x] :
      \typeB}
  \end{flalign*}
  The one but last step amounts to a diagram chase that recurs to
  naturality and the coherence theorem of symmetric monoidal
  categories.
  \begin{flalign*}
    & \, \sem{\Gamma,\Delta,E \vljud u[\prog{pm}\ v\ \prog{to}\ x \otimes y.\
      w /z] : \typeB} = \sem{\Gamma, z : \typeA \vljud u}
    \comp \join_{\Gamma;\typeA}
    \comp (\id \otimes \sem{\prog{pm}\ v\ \prog{to}\ x \otimes y.\ w})
    \comp \spl_{\Gamma;\Delta,E} \\
    & = \sem{u} \comp \join_{\Gamma;\typeA} \comp
    (\id \otimes (\sem{w} \comp \join_{E;\typeC;\typeD} \comp \alpha \comp \sw
    \comp (\sem{v} \otimes \id) \comp \spl_{\Delta;E})) \comp \spl_{\Gamma;\Delta,E}
    \\
    & = \sem{u} \comp \join_{\Gamma;\typeA} \comp (\id \otimes \sem{w}) \comp
    (\id \otimes (\join_{E;\typeC;\typeD} \comp \alpha \comp \sw
    \comp (\sem{v} \otimes \id) \comp \spl_{\Delta;E})) \comp \spl_{\Gamma;\Delta,E}
    \\
    & = \sem{u} \comp \join_{\Gamma;\typeA} \comp (\id \otimes \sem{w}) 
    \comp \spl_{\Gamma;E,\typeC,\typeD} \comp \join_{\Gamma;E,\typeC,\typeD} \comp
    (\id \otimes (\join_{E;\typeC;\typeD} \comp \alpha \comp \sw
    \comp (\sem{v} \otimes \id) \comp \spl_{\Delta;E})) \comp \spl_{\Gamma;\Delta,E}
    \\
    & = \sem{u[w/z]} \comp \join_{\Gamma;E,\typeC,\typeD} \comp
    (\id \otimes (\join_{E;\typeC;\typeD} \comp \alpha \comp \sw
    \comp (\sem{v} \otimes \id) \comp \spl_{\Delta;E})) \comp \spl_{\Gamma;\Delta,E}
    \\
    & = \sem{u[w/z]} \comp \join_{\Gamma,E;\typeC;\typeD} \comp \alpha
    \comp \sw \comp (\sem{v} \otimes \id)
    \comp \spl_{\Delta;\Gamma,E} \comp
    \sh_{\Gamma,\Delta,E} \\
    & = \sem{\Gamma,\Delta,E \vljud \prog{pm}\ v\ \prog{to}\ x \otimes y.\
      u [w/z] : \typeB}
  \end{flalign*}
  The one but last step amounts to a diagram chase that recurs to
  naturality and the coherence theorem of symmetric monoidal
  categories.
\end{proof}

\begin{proof}[Proof of Theorem~\ref{theo:auto}]
  The proof follows by showing that the closed monoidal structure of
  $\VCat$ preserves symmetry and separation. It is immediate for
  symmetry. For separation, note that since $\mathcal{V}$ is integral
  the inequation $x \otimes y \leq x$ holds for all
  $x,y \in \mathcal{V}$. It follows that the monoidal structure
  preserves separation. The fact that the closed structure also
  preserves separation uses the implication
  $x \leq \bigwedge A \Rightarrow \forall a \in A.\ x \leq a$ for all
  $x \in X, A \subseteq X$.
\end{proof}

\begin{proof}[Proof of Theorem~\ref{theo:vari}]
  Let us denote by $\Lang^\lambda(\catC)$ the \emph{linear
    $\lambda$-theory} generated from $\catC$. According to
  Theorem~\ref{theo:internalb}, the category
  $\Syn(\Lang^\lambda(\catC))$ (\ie the syntactic category generated
  from $\Lang^\lambda(\catC)$) is locally small whenever $\catC$ is
  locally small.  Then consider two types $\typeA$ and $\typeB$.  We
  will prove our claim by taking advantage of the axiom of replacement
  in ZF set-theory, specifically by presenting a \emph{surjective}
  map,
  \begin{flalign*}
    \Syn(\Lang^\lambda(\catC))(\typeA,\typeB) \longrightarrow
    \Syn(\Lang(\catC))(\typeA,\typeB)
  \end{flalign*}
  The crucial observation is that if $v = w$ in $\Lang^\lambda(\catC)$
  then $v =_\top w$ and $w =_\top v$ in $\Lang(\catC)$. This is
  obtained by the definition of a model, the definition of a
  $\mathcal{V}$-category, and the definition of $\Lang(\catC)$. This
  observation allows to establish the surjective map that sends $[v]$
  to $[v]$, \ie it sends the equivalence class of $v$ as a
  $\lambda$-term in $\Lang^\lambda(\catC)$ into the equivalence class
  of $v$ as a $\lambda$-term in $\Lang(\catC)$.
\end{proof}

\begin{proof}[Proof of Theorem~\ref{theo:Ban}]
  The autonomous structure of $\Ban$ is well-known~\cite{DK20a,K81c},
  so let us focus on  showing that it is a \emph{$\Met$-enriched}
  autonomous category. The enrichment is simply given by distance
  function induced by the operator norm, thus if $S,T\in\Ban(X,Y)$,
\[
d(S,T)=\bigvee\{\norm{(S-T)(x)} \mid \norm{x}\leq 1\}
\]

\vspace{0.5em}
\noindent \textbf{Composition is a short map.}

\noindent Let $T,T'\in \Ban(X,Y)$ and $S,S'\in \Ban(Y,Z)$, we compute:
\begin{align*}
d(ST,S'T')&\defeq \bigvee\{\norm{ST(x)-S'T'(x)} \mid \norm{x}\leq 1\} \\
&=\bigvee\{\norm{ST(x)-ST'(x)+ST'(x)-S'T'(x)}\mid \norm{x}\leq 1\}\\
&\leq\bigvee\{\norm{ST(x)-ST'(x)}+\norm{ST'(x)-S'T'(x)} \mid \norm{x}\leq 1\}\\
&=\bigvee\{\norm{ST(x)-ST'(x)}\mid \norm{x}\leq 1\}+\bigvee\{\norm{ST'(x)-S'T'(x)} \mid \norm{x}\leq 1\}\\
&=\bigvee\{\norm{S(T-T')(x)}\mid \norm{x}\leq 1\}+\bigvee\{\norm{ST'(x)-S'T'(x)}\mid \norm{x}\leq 1\}\\
&\stackrel{(\star)}{\leq} \bigvee\{\norm{T(x)-T'(x)}\mid \norm{x}\leq 1\}+\bigvee\{\norm{ST'(x)-S'T'(x)}\mid \norm{x}\leq 1\}\\
&\leq \bigvee\{\norm{T(x)-T'(x)}\mid \norm{x}\leq 1\}+\bigvee\{\norm{S(y)-S'(y)}\mid\norm{y}\leq 1\}\\
&\defeq d(T,T') + d(S,S')\\
&\defeq d((T,S),(T',S'))
\end{align*}
where $(\star)$ follows from the fact that,
\[
\norm{S(T-T')(x)}=\norm{T(x)-T'(x)}\norm{S\left(\frac{(T-T')(x)}{\norm{T(x)-T'(x)}}\right)}\leq \norm{T(x)-T'(x)}
\]
by linearity of $S$ and by the fact that $\norm{S}\leq 1$. This shows
that $\Ban$ is $\Met$-enriched. We now turn to the first clause of
Definition \ref{defn:enr_aut}.

\vspace{1em}
\noindent \textbf{The monoidal operation is an enriched bi-functor.}

\noindent Note first that if $S\in\Ban(X,Y)$ and $T,T'\in \Ban(X',Y')$
then,
\[
S\ptp T-S\ptp T'=S\ptp(T-T')
\]
Indeed, since $S\ptp T$ is the unique linear operator such that
$S\ptp T(x\otimes x') = S(x)\otimes T(x')$, we can reason pointwise
and get,
\begin{align*}
(S\ptp T-S\ptp T')(x\otimes x')  & = (S\ptp T)(x\otimes x') - (S\ptp T')(x\otimes x')\\
&=S(x)\otimes T(x')  - S(x)\otimes T'(x')\\
&=S(x)\otimes (T(x')-T'(x'))\\
&=(S\ptp (T-T'))(x\otimes x')
\end{align*}
where the penultimate step follows from the basic definition of the
tensor product of vector spaces. Now we can show that for any Banach
spaces $X,X',Y,Y'$ the projective tensor map,
\[
\Ban(X,Y)\otimes\Ban(X',Y')\to\Ban(X\ptp X', Y\ptp Y'),  (S,T)\mapsto S\ptp T
\]
where $\otimes$ once again denotes the monoidal operation in $\Met$,
is short. We simply compute,
\begin{align*}
d(S\ptp T, S'\ptp T')&\defeq \norm{S\ptp T - S'\ptp T'}\\
&= \norm{S\ptp T -S\ptp T' + S\ptp T'- S'\ptp T'}\\
&\leq  \norm{S\ptp T -S\ptp T'} +\norm{ S\ptp T'- S'\ptp T'}\\
&=\norm{S\ptp(T-T')} + \norm{(S-S')\ptp T'}\\
&=\norm{S}\norm{T-T'} + \norm{S-S'}\norm{T'}\\
&\leq \norm{T-T'} + \norm{S-S'} \defeq d((S,T),(S,T'))
\end{align*}
where the last step uses the fact that $\norm{S},\norm{T}\leq 1$ and the penultimate step uses the basic fact that $\norm{S\ptp T}=\norm{S}\norm{T}$ (see \cite[\S 2.1]{ryan2013introduction}). Finally, we show the second clause of Definition \ref{defn:enr_aut}.

\vspace{1em}
\noindent
\textbf{The adjunction $-\ptp Y \vdash Y\multimap -$ is a $\Met$-adjunction.}

\noindent
The fact that the maps,
\[
\Ban(X,X')\to \Ban(X\ptp Y, X\ptp Y),\ f\mapsto f\ptp \id_Y
\]
are short follows by re-writing them as,
\[
  \Ban(X,X')\simeq \Ban(X,X')\otimes
  1\longrightarrow\Ban(X,X')\otimes\Ban(Y,Y)\longrightarrow\Ban(X\ptp
  X', Y\ptp Y)
\]
and the fact that the monoidal operation of $\Ban$ is an enriched
bi-functor. 
Similarly, the map,
\[
\Ban(X,X')\to \Ban(Y\multimap X, Y\multimap X'),\ S \mapsto (T\mapsto S T)
\]
is short. This is a consequence of the following fact. Consider two
operators $S,S' \in \Ban(X,X')$. For all bounded operators
$T \in Y \multimap X$ with $\norm{T} \leq 1$ we have
$d(S,S') \defeq \norm{S - S'} \geq \norm{ST - S'T}$, which provides,
\[
  d(S,S') \geq \bigvee \{ \norm{ ST - S'T } \mid \norm{T} \leq 1 \} \defeq
  d(T \mapsto S  T, T \mapsto S'T)
\]
Finally, we need to show that the adjunction
$-\ptp Y \vdash Y \multimap -$ defines an isometry,
\[
\Ban(X\ptp Y, Z)\simeq \Ban(X,Y\multimap Z).
\]
Indeed,  the bijection from left to right is defined by,
\[
T\mapsto (Y\multimap - )(T)\comp \eta_Y
\]
where $\eta$ is the unit of the adjunction. Since $Y \multimap -$ is
$\Met$-enriched, the assignment $T\mapsto(Y\multimap -)(T)$ is short,
and composition by $\eta_Y$ is short.  Thus the invertible map
$\Ban(X\ptp Y, Z)\to \Ban(X,Y\multimap Z)$ is short. By a similar
argument using the co-unit of the adjunction and the fact that
$-\ptp Y$ is $\Met$-enriched we get that the invertible map
$\Ban(X,Y\multimap Z) \to \Ban(X\ptp Y, Z)$ is also short. It follows
that both maps must be invertible isometries.
\end{proof}

\noindent
\textbf{Proof that the axiom \eqref{eq:bern} is sound.}

\noindent The total variation distance between $\sem{\bern(x,y,p)}$ and $\sem{\bern(x,y,q)}$ with $p,q\in[0,1]$ is given by:
\begin{flalign*}
&\bigvee_A \abs{\int\hspace{-4pt}\int \hspace{-1pt} p\delta_u(A)\hspace{-2pt} +\hspace{-2pt} (1\hspace{-2pt}-\hspace{-2pt}p)\delta_v(A)~d\sem{x}(du)d\sem{y}(dv)\hspace{-2pt} - \hspace{-3pt}\int\hspace{-4pt}\int \hspace{-1pt} q\delta_u(A) \hspace{-2pt}+ \hspace{-2pt}(1\hspace{-2pt}-\hspace{-2pt}q)\delta_v(A)~d\sem{x}(du)d\sem{y}(dv)}
\\
=&\bigvee_A\abs{\int\hspace{-4pt}\int (p-q)\delta_u(A) + ((1-p)-(1-q))\delta_v(A)~d\sem{x}(du)d\sem{y}(dv)}
\\
=&\bigvee_A\abs{(p-q)\sem{x}(A) + (q-p)\sem{y}(A)}
\\
=&\bigvee_A\abs{(p-q)(\sem{x}(A) -\sem{y}(A))}
\\
\leq & ~\abs{p-q}
\end{flalign*}

\end{document}